\newtheorem{definition}{Definition}
\newtheorem{theorem}{Theorem}
\begin{document}

\title{Trustworthy Neighborhoods Mining: Homophily-Aware Neutral Contrastive Learning for Graph Clustering}

\author{Liang Peng, Yixuan Ye, Cheng Liu$^{*}$,
    Hangjun Che,
    Man-Fai Leung,
	Si Wu,
	and~Hau-San Wong
\thanks{Liang Peng and Yixuan Ye are with the Department of Computer Science, Shantou University (email: 23lpeng@stu.edu.cn; 22yxye2@stu.edu.cn)}
\thanks{Cheng Liu is with the College of Computer Science and Technology, Huaqiao University and the Department of Computer Science, Shantou University (email:chengliu10@gmail.com)}
	\thanks{Hangjun Che is with the Chongqing Key Laboratory of Nonlinear Circuits and Intelligent Information Processing, College of Electronic and Information Engineering, Southwest University, Chongqing, China (email: hjche123@swu.edu.cn)}
	\thanks{Man-Fai Leung is with the School of Computing and Information Science, Faculty of Science and Engineering, Anglia Ruskin University, Cambridge, U.K. (e-mail: man-fai.leung@aru.ac.uk)}
    \thanks{Si Wu is with the School of Computer Science and Engineering, South China University of Technology. (email: cswusi@scut.edu.cn)}
	\thanks{Hau-San Wong is with the Department of Computer Science, City University of Hong Kong. (email: cshswong@cityu.edu.hk)}
	\thanks{$^{*}$Corresponding author: Cheng Liu}

}

% The paper headers
% \markboth{IEEE Transactions on Knowledge and Data Engineering}%
% {Shell \MakeLowercase{\textit{et al.}}: A Sample Article Using IEEEtran.cls for IEEE Journals}

%\IEEEpubid{0000--0000/00\$00.00~\copyright~2021 IEEE}
% Remember, if you use this you must call \IEEEpubidadjcol in the second
% column for its text to clear the IEEEpubid mark.

\maketitle

\begin{abstract}
%Contrastive graph clustering has shown significant success using self-supervised techniques that contrast augmented graph views. 
Recently, neighbor-based contrastive learning has been introduced to effectively exploit neighborhood information for clustering. However, these methods rely on the homophily assumption—that connected nodes share similar class labels and should therefore be close in feature space—which fails to account for the varying homophily levels in real-world graphs.
As a result, applying contrastive learning to low-homophily graphs may lead to indistinguishable node representations due to unreliable neighborhood information, making it challenging to identify trustworthy neighborhoods with varying homophily levels in graph clustering.
To tackle this, we introduce a novel neighborhood Neutral Contrastive Graph Clustering method NeuCGC that extends traditional contrastive learning by incorporating neutral pairs—node pairs treated as weighted positive pairs, rather than strictly positive or negative. These neutral pairs are dynamically adjusted based on the graph’s homophily level, enabling a more flexible and robust learning process. 
Leveraging neutral pairs in contrastive learning, our method incorporates two key components: 1) an adaptive contrastive neighborhood distribution alignment that adjusts based on the homophily level of the given attribute graph, ensuring effective alignment of neighborhood distributions, and 2) a contrastive neighborhood node feature consistency learning mechanism that leverages reliable neighborhood information from high-confidence graphs to learn robust node representations, mitigating the adverse effects of varying homophily levels and effectively exploiting highly trustworthy neighborhood information.
Experimental results demonstrate the effectiveness and robustness of our approach, outperforming other state-of-the-art graph clustering methods. Our code is available at \url{https://github.com/THPengL/NeuCGC}.
\end{abstract}

\begin{IEEEkeywords}
Contrastive Graph Clustering; Graph Homophily.
\end{IEEEkeywords}

\section{Introduction}
% \IEEEPARstart{I}{n} the era of rapid digitalization, multimedia data —encompassing webpages, images, video, and interactive content —has become an integral part of online platforms. This has led to the emergence of multimedia graph data as a powerful abstraction for representing and analyzing relationships within such datasets. For instance, in Wikipedia platform, a graph may represent webpages as nodes and define hyperlinks as edges based on content similarity.
% Effective multimedia graph data analysis facilitates applications such as multimedia retrieval \cite{Retrieval2024TMM-Zhang, Retrieval2024TMM-Zhang2}, content recommendation \cite{Recommender2024TMM-Wang, Recommender2024TMM-Zhou, Recommender2020-graphsail-xu}, semantic categorization \cite{Ident2024TMM-Li, Ident2024TMM-Lu, Ident2024TMM-Zhang, Ident2024TMM-Zhao}, etc.
% Among these studies, one of the fundamental tasks is clustering. Graph clustering (GC) aims to group related nodes together based on their structural and content-based properties, unveiling latent patterns and communities within the graph. 
%% Effective clustering facilitates applications such as multimedia retrieval \cite{}, content recommendation \cite{}, semantic categorization \cite{}, etc.
%% Real world applications frequently involve various types of graph data, which offer valuable insights into the relationships between elements. Graph clustering is a fundamental task in graph data analysis, aiming to group nodes based on learned features. 
\IEEEPARstart{R}{eal-world} applications frequently involve various types of graph data, which offer valuable insights into the relationships between elements. Graph clustering is a fundamental task in graph data analysis, aiming to group nodes based on learned features.
Recently, various methods have been developed, achieving significant advancements in graph clustering by employing diverse strategies \cite{2017MGAE-wang},\cite{2020AGE-cui},\cite{2020CGCN-hui},\cite{2020MVGRL-hassani}.
Among these methods, the contrastive learning paradigm \cite{2006CL-hadsell, 2018CPC-oord} has emerged as a promising approach, leveraging different augmented views of the graph as an effective self-supervised technique \cite{2020MVGRL-hassani},\cite{2020GraphCL-you},\cite{2021BGRL-R2-thakoor}, \cite{2022BGRL-thakoor-Large},\cite{2022AutoGCL-yin},\cite{2023ABGML-R2-chen},\cite{2024TopoGCL-chen}. The main idea behind traditional contrastive graph clustering (CGC) methods is to learn robust representations by contrasting positive pairs with negative pairs derived from these augmented views \cite{2022AGC-DRR-Gong, 2023CONVERT-yang}.
Additionally, several studies have explored leveraging the inherent information in a node's neighborhood for neighbor contrastive learning, leading to notable improvements \cite{2021GraphMLP-hu-graph},\cite{2022LocalGCL-zhang-localized},\cite{2023SCGC-Liu},\cite{2023NCLA-shen-neighbor},\cite{11209498},\cite{xian2025mdhgfn}.

However, a critical assumption underlying most existing graph neighborhood contrastive learning methods is the homophilic nature of graph data, where connected nodes are typically expected to belong to the same class and be close to each other in the learned representation space. To some extent, it ensures the reliability of information in certain neighborhoods. 
While in practice, real-world graphs can exhibit varying levels of homophily \cite{2021CPGNN-zhu}, \cite{2022Hete-zheng}, and those with low homophily can result in neighbors having different labels, which distorts the essential neighborhood information used for contrastive learning.
To address these challenges, some graph learning methods have attempted to leverage graph homophily to learn reliable relationships between neighbors from different perspectives. For instance, \cite{2022ProGCL-xia} addresses the issue of false negatives by refining the criteria for selecting hard negatives. Additionally, \cite{2023DGCN-pan-beyond} introduces a mixed graph filter designed to capture both low-frequency and high-frequency information within the graph, thereby making the method adaptable to varying levels of homophily. 
Nevertheless, existing methods still face two key challenges: 
%\textbf{\textit{(1) How to directly leverage graph homophily, particularly in identifying reliable positive sample pairs among neighboring nodes for contrastive learning, and (2) How to mitigate the negative impact of low homophily in graphs, and effectively exploit highly reliable neighborhood information for embedding learning, remains a significant challenge.}}
\textbf{\textit{(1) How to directly leverage graph homophily, particularly in correctly identifying existing or newly incorporated  neighboring nodes as reliable positive sample pairs for contrastive learning, and (2) How to mitigate the negative impact of low homophily in graphs, and effectively exploit highly reliable neighborhood information for embedding learning, require further exploration.}}

To address these challenges, we propose a novel homophily-aware Neutral Contrastive Graph Clustering approach \textbf{NeuCGC} for mining trustworthy neighborhoods in contrastive learning. This method extends traditional neighborhood contrastive learning by incorporating \textit{neutral sample pairs}—pairs of nodes treated as weighted positive pairs instead of strictly positive or negative. These neutral pairs are dynamically adjusted based on the graph's homophily level, allowing for a more flexible and robust learning process.
By leveraging neutral pairs in contrastive learning, our method incorporates two key components:
Firstly, neutral pairs are adaptively selected as positive pairs for contrastive learning based on the estimated level of graph homophily. In other words, in graphs with low homophily, neutral pairs among neighbor nodes will contribute less to contrastive learning, while in graphs with high homophily, they will contribute more. Building upon this, we propose a \textit{neutral contrastive learning (NeuCL)} mechanism that minimizes the distance of positive pairs, partially minimizes that of neutral pairs, while maximizes that of negative pairs. To leverage this, we propose a neighborhood distribution neutral contrastive alignment loss that incorporates these neutral pairs using a homophily estimation strategy. This dynamic contrastive distribution alignment enables the weighted adjustment of neutral pairs as positive pairs, ensuring better adaptation to graphs with varying homophily levels and improving the robustness of contrastive learning.
Secondly, in addition to the attribute graph, we further leverage reliable neighborhood information learned from highly confident graphs based on node representations. We propose an adaptive feature consistency neutral contrastive learning module that utilizes this reliable neighborhood information to mitigate the negative impact of varying homophily levels, while effectively exploiting trustworthy neighborhood relationships to enhance node representation learning.

% The significance of our contributions can be summarized as follows:
Our primary contributions can be summarized as follows:
% \begin{itemize}
%     \item We explore how to directly leverage graph homophily and address low homophily using a novel concept of neutral sample pair with adaptive neighborhood contrastive learning. This approach is adaptable to real-world graphs with varying levels of homophily.
%     \item To further address the challenges of low homophily, we propose an augmentation-free heterogeneous embedding learning model that leverages distribution alignment.
%     \item Extensive experiments on various graph datasets empirically demonstrate the effectiveness of our method across different levels of graph homophily.
% \end{itemize}

\begin{itemize}
    \item We introduce neutral pairs to mine trustworthy neighborhood information in graph contrastive learning, ensuring more reliable learning across graphs with varying homophily levels. 

    \item By leveraging neutral pairs to capture reliable neighborhood information, we propose NeuCGC comprising a neutral contrastive alignment mechanism and adaptive feature consistency learning, which enhance the robustness and flexibility of contrastive graph clustering.

    \item Extensive experiments on homophilic and heterophilic datasets empirically demonstrate that our approach outperforms other baselines, particularly enjoying a performance improvement of 13\% on average in terms of ARI.
    % \item Extensive experiments on various graph datasets empirically demonstrate that our approach achieves outstanding clustering performance across graph data with diverse homophily ratios in real-world scenarios.
\end{itemize}

% \begin{figure*}[!t]
%     \centering{
%         \includegraphics[width=0.85\textwidth]{figs/fig1.pdf}
%         \captionsetup{justification=centering}
%         \caption{Illustrations of existing two contrastive graph clustering (CGC) patterns and our proposed neutral contrastive graph clustering. The traditional CGC, neighbor CGC}
%     \label{fig1}
%     }
% \end{figure*}
\section{Related Work}
\subsection{Deep Attributed Graph Clustering} 
As outlined in recent survey \cite{2022survey-liu}, in addition to contrastive graph clustering, most existing deep attributed graph clustering (DGC) techniques fall into two main categories: reconstructive approaches and adversarial approaches. Early study such as MGAE \cite{2017MGAE-wang} employs graph autoencoders \cite{2016VGAE-kipf} to reconstruct node features for representation learning. Other works, including AGAE \cite{2019AGAE-Tao}, ARGA, and ARVGA \cite{2020ARGA-Pan}, adopt adversarial strategies to generate self-supervised signals that encourage more discriminative embeddings. Recent DGC methods have significantly advanced clustering performance by integrating structural and semantic information. For instance, SDCN \cite{2020SDCN-bo} and DFCN \cite{2021DFCN-tu-deep} fuse autoencoders with graph neural networks (GNNs) \cite{2016GCN-kipf} to jointly capture attribute and structural features. More recently, AutoSSL \cite{2022AutoSSL-jin} and DyFSS \cite{2024DyFSS-zhu} incorporate multiple self-supervised tasks to improve clustering quality. To mitigate feature over-smoothing, AGC and IAGC \cite{2023AGC-Zhang} adopt a $k$-order graph convolution to capture global cluster structures. Additionally, SDAC-DA \cite{2024SDAC-DA-Berahmand} introduces pairwise constraint selection for semi-supervised graph clustering.

\subsection{Contrastive Graph Clustering}
A well-designed self-supervised objective is crucial for deep clustering, and contrastive learning \cite{2006CL-hadsell} has become a powerful paradigm in graph clustering \cite{2020MVGRL-hassani}. The core idea of graph contrastive learning (GCL) is to construct positive pairs—typically through graph augmentations—and learn representations by maximizing consistency between them while contrasting against negative ones \cite{2019DGI-velickovic}. 
For instance, GraphCL \cite{2020GraphCL-you} leverages predefined graph augmentations and the InfoNCE \cite{2018CPC-oord} loss to maximize mutual information consistency between positive pairs, thereby learning invariant features. 
Similarly, traditional CGC methods such as AGC-DRR \cite{2022AGC-DRR-Gong}, DCRN \cite{2022DCRN-liu}, and CONVERT \cite{2023CONVERT-yang} design diverse augmentation strategies to generate positive and negative pairs for contrastive learning. However, they consider only distinct views of the same node as positive pairs, disregarding the informative neighborhood structure inherent in graphs. 
To address this limitation, neighbor-based CGC approaches like SCGC \cite{2023SCGC-Liu}, NCLA \cite{2023NCLA-shen-neighbor}, LocalGCL \cite{2022LocalGCL-zhang-localized}, and NS4GC \cite{2024NS4GC-Liu} incorporate neighborhood information by treating a node and its neighbors as positive pairs. Nevertheless, these methods heavily rely on the homophily assumption, which may not hold in heterophilic graphs prevalent in many real-world scenarios \cite{2022DSSL-xiao-decoupled} \cite{2024R4-li-permutation}.

\subsection{Graph Clustering under Varying Homophily Levels.}
Homophily, a fundamental property of graph data, describes the tendency of nodes within the same class to be more likely connected. Real-world graphs exhibit diverse degrees of homophily, and recent methods leveraging this property have achieved promising performance. For instance, HomoGCL \cite{2023HomoGCL-Li} adopts node saliency to estimate the probability of random augmented neighbors being positive for graph contrastive learning. DGCN \cite{2023DGCN-pan-beyond} and AHGFC \cite{2024AHGFC-wen-homophily} introduce graph filters to capture both homophilic and heterophilic structure information for graph clustering or muti-view graph clustering. Recently, HeterGCL \cite{2024HeterGCL-Wang} combines structural and semantic cues to enhance representation learning across graphs with different homophily ratios. Despite these advances, effectively leveraging graph homophily for contrastive learning, especially for correctly identifying existing neighbors or discovering more semantically similar neighbors as reliable positive pairs, remains an under-explored challenge. In contrast to these methods, as well as traditional CGC and neighbor-base CGC ones, our approach addresses this gap by introducing a neutral contrastive alignment technique along with an adaptive feature consistency neutral contrastive learning module. This design enables the reliable exploitation of neighborhood structure information without relying on the homophily assumption, and thus applicable to real-world graphs with diverse levels of homophily.
\begin{figure*}[!t]
    \centering{
        \includegraphics[width=1\textwidth]{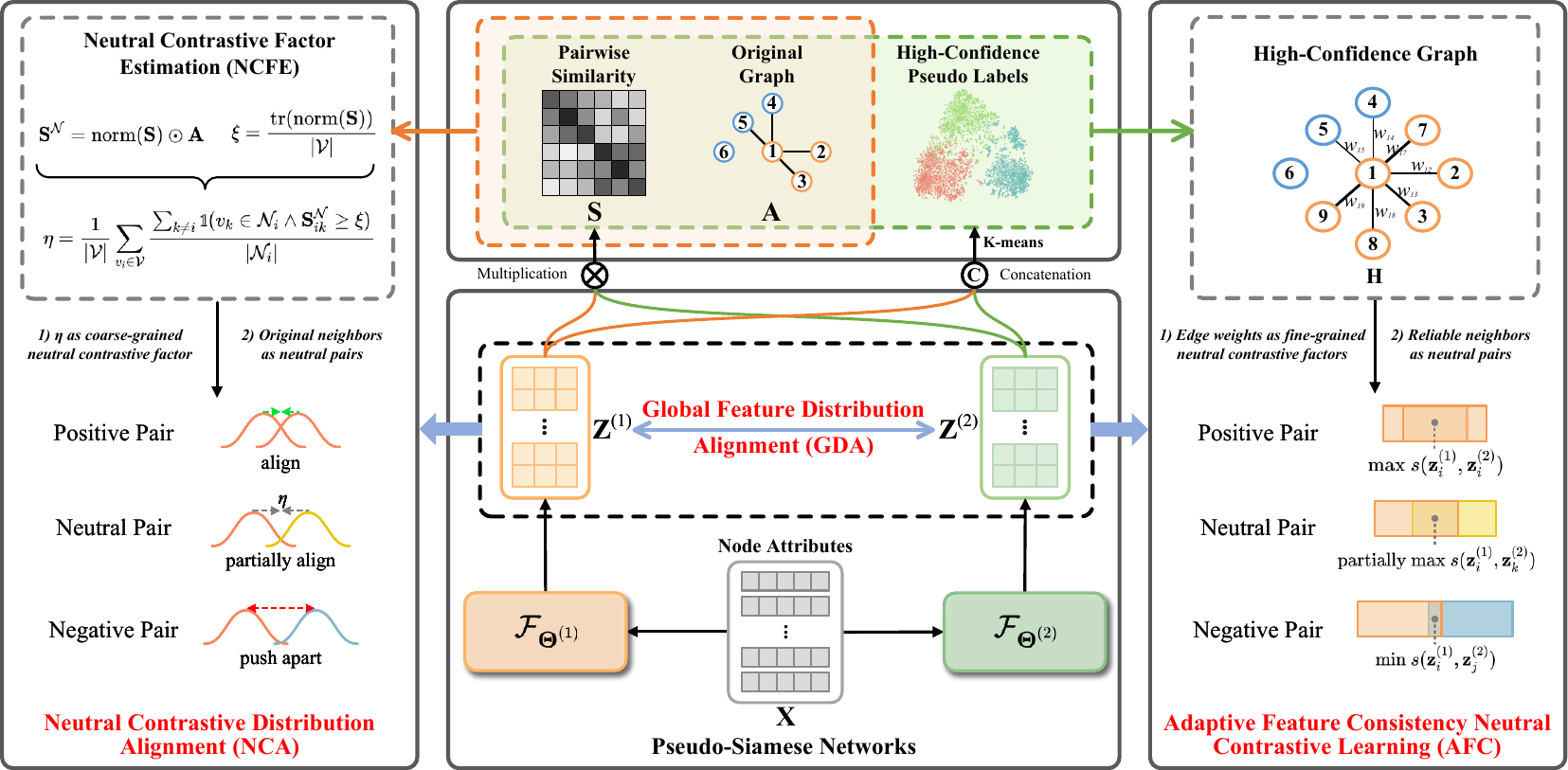}
        \caption{Overview of the proposed NeuCGC. 
        Given the node attributes $\textbf{X}$ and the adjacency matrix $\textbf{A}$ of a graph, we employ a Pseudo-Siamese Networks to encode $\textbf{X}$ into embeddings $\textbf{Z}^{(1)}$ and $\textbf{Z}^{(2)}$, and jointly optimize the Global Feature Distribution Alignment (GDA), Neutral Contrastive Distribution Alignment (NCA), and Adaptive Feature Consistency Neutral Contrastive Learning (AFC). Specifically, GDA facilitates global information sharing. NCA adopts the Neutral Contrastive Factor Estimation (NCFE) technique which leverages $\textbf{A}$ and node pairwise similarity $\textbf{S}$ to estimate a coarse-grained neutral contrastive factor $\eta$, thereby enabling neutral contrastive learning and enhancing local neighborhood information interaction. AFC employs $\textbf{S}$, $\textbf{A}$, and reliable pseudo-labels to construct a high-confidence graph $\textbf{H}$, which is subsequently used to enhance the feature consistency $s(\cdot)$ of node embeddings.}
    \label{fig_overview}
    }
\end{figure*}

\section{Preliminaries}
\subsection{Notations and Problem Definition}
Given an attributed graph \(\mathcal{G}=(\mathcal{V},\mathcal{E}, \textbf{X})\), where \(\mathcal{V} = \{v_1, v_2, \ldots, v_N\}\) is the vertex set with $N$ nodes and \(|\mathcal{V}|=N\), \(\mathcal{E}\) is the edge set, and \(\textbf{X} = [\textbf{x}_1, \textbf{x}_2, \ldots, \textbf{x}_N]^{\top} \in \mathbb{R}^{N \times D}\) is the corresponding node attribute matrix (with \(\textbf{x}_i \in \mathbb{R}^{D}\) as the attribute of node \(v_i\) and \(D\) as the number of features). Let \(\textbf{A} \in [0,1]^{N \times N}\) be the adjacency matrix where \(\textbf{A}_{ij} = 1\) if edge \((i,j) \in \mathcal{E}\) and \(0\) otherwise. 
% Define \(\widehat{\textbf{A}} = \textbf{A} + \textbf{I}\) as the adjacency matrix with self-loops, where \(\textbf{I}\) is the identity matrix. The degree matrix of \(\widehat{\textbf{A}}\) is \(\widehat{\textbf{D}} = \text{diag}(\hat{d}_1, \hat{d}_2, \ldots, \hat{d}_N)\) with \(\hat{d}_i = \sum_{j} \widehat{\textbf{A}}_{i,j}\). The graph Laplacian matrix is \(\textbf{L} = \widehat{\textbf{D}} - \widehat{\textbf{A}}\), and the symmetric normalized graph Laplacian is \(\widetilde{\textbf{L}} = \textbf{I} - \widehat{\textbf{D}}^{-\frac{1}{2}} \widehat{\textbf{A}} \widehat{\textbf{D}}^{-\frac{1}{2}}\).

Consider a deep graph clustering task where no ground truth labels are provided during training; the goal is to cluster similar nodes in $\mathcal{V}$ of $\mathcal{G}$ based on the learned features $\textbf{Z}$ from their attributes $\textbf{X}$ by neural network  in an unsupervised manner.
In general, the input graph $\mathcal{G}$ is feed to a neural network $f_{\mathbf{\Theta}}$ parameterized by $\mathbf{\Theta}$, such that the learned node embeddings $\textbf{Z}=f_{\mathbf{\Theta}} (\mathcal{G})= f_{\mathbf{\Theta}} (\textbf{X}, \textbf{A}) \in \mathbb{R}^{N\times d}$ in latent space can be used to divide nodes into disjoint clusters with the application of clustering algorithms such as K-means or spectral clustering, where $d$ is the dimension of latent features. A summary of these notations is provided in the Supplementary Material.

\subsection{Homophily Ratio and Neighborhood Homophily Ratio}
The homophily ratio is a quantitative measure that helps to quantify the extent of homophily in a graph $\mathcal{G}$.
\begin{definition}[Homophily Ratio \cite{2020rh_Zhu}]\label{df_r_h}
The homophily ratio $r_h$ in a graph $\mathcal{G}=  (\mathcal{V},\mathcal{E})$ with node set $\mathcal{V}$ and edge set $\mathcal{E}$ is the fraction of edges connecting nodes within the same class to the total number of edges: 
\begin{equation}\label{eq_r_h}
    r_h = \frac{|\{(i, j) : (i, j) \in \mathcal{E} \wedge y_i = y_j\}|}{|\mathcal{E}|},
\end{equation}
where $(i, j)$ is the edge connecting nodes $v_i$ and $v_j$, $y_i$ and $y_j$ are the ground-truth labels of them, respectively. 
\end{definition}

When focusing on the homophily within node neighborhoods, we can derive another measure to quantify neighborhood homophily.
\begin{definition}[Neighborhood Homophily Ratio \cite{2020rnh-mpei}]\label{df_r_nh}
The neighborhood homophily ratio $r_{nh}$ is the average homophily ratio across all node neighborhoods.
\begin{equation}\label{eq_r_nh}
    r_{nh} = \frac{1}{|\mathcal{V}|} \sum_{v_i \in \mathcal{V}} \frac{|\{v_j : v_j \in \mathcal{N}_i \wedge y_i = y_j\}|}{|\mathcal{N}_i|},
\end{equation}
where $\mathcal{N}_i = \{v_k: v_k \in \mathcal{V} \wedge (i,k) \in \mathcal{E}\}$ denotes the neighbor set of $v_i$, and $|\mathcal{N}_i|$ is the number of $v_i$'s neighbors. 
\end{definition}
% The neighborhood homophily ratio \(r_{nh}\) represents the homophily within the neighborhoods of nodes. For a single node, \(r_{nh}\) is the proportion of neighbors with the same label. For the entire graph \(\mathcal{G}\), it is defined as:

% The \(r_{nh}\) of \(\mathcal{G}\) reflects the average homophily in the neighborhoods of all nodes.
The neighborhood homophily ratio $r_{nh}$ of $\mathcal{G}$ represents the average homophily ratio across all node neighborhoods. Compared to homophily ratio $r_{h}$, it provides insight into the average homophily extent at node level.
Table \ref{tb_Statistics} provides detailed statistics of real-world graph datasets, where we can observe varying levels of homophily ratio $r_h$ and neighborhood homophily ratio $r_{nh}$. In general, an attributed graph can be considered homophilic if it has $r_h \rightarrow 1$, and heterophilic if $r_h \rightarrow 0$ \cite{2020rh_Zhu},\cite{2022KDGA-wu-knowledge},\cite{2023GraphACL-Xiao}. Therefore, effectively leveraging varying levels of graph homophily to accurately identify reliable neighborhood information is crucial for neighborhood contrastive learning in real-world applications.

\begin{table}[!t]
    \caption{Statistics of homophilic and heterophilic graph datasets.}
    \begin{center}
    \setlength{\tabcolsep}{1.0mm}{
    \begin{tabular}{cccccccc}
    \toprule
    \textbf{Dataset} &\textbf{\#Nodes} &\textbf{\#Edges} &\textbf{\#Classes} &\textbf{\#Attributes} &\(r_h\) &\(r_{nh}\) &\(\delta\) \\
    \hline 
    \textbf{Cora} &2708 &5278 &7 &1433 &0.81 &0.83 &0.0088 \\
    \textbf{Citeseer} &3327 &4552 &6 &3703 &0.74 &0.72 &0.0042 \\
    \textbf{ACM} &3025 &13128 &3 &1870 &0.82 &0.88 &0.0089 \\
    \textbf{DBLP} &4057 &3528 &4 &334 &0.80 &0.84 &0.0023 \\
    \textbf{Photo} &7650 &119081 &8 &745 &0.83 &0.85 &0.0266 \\
    \textbf{Pubmed} &19717 &44324 &3 &500 &0.80 &0.79 &0.0006 \\
    \textbf{Texas} &183 &309 &5 &1703 &0.06 &0.06 &0.0341 \\
    \textbf{Wisconsin} &251 &466 &5 &1703 &0.18 &0.16 &0.0371 \\
    \textbf{Cornell} &183 &295 &5 &1703 &0.30 &0.30 &0.0303 \\
    \textbf{Chameleon} &2277 &36101 &5 &2325 &0.23 &0.25 &0.0177 \\
    \textbf{Crocodile} &11631 &360040 &5 &13183 &0.24 &0.30 &0.0036 \\
    \bottomrule
    \end{tabular}
    }
    \label{tb_Statistics}
\end{center}
\vspace{-1.5em}
\end{table}

\subsection{Neutral Pair in Contrastive Learning}\label{sec_neutral_pair}
Recognizing that graphs exhibit varying degrees of homophily, we identify two limitations in existing approaches: (1) Traditional CGC methods that ignore structural information often yield suboptimal performance (particularly on homophilic graphs), while (2) Neighbor-based CGC methods treating all neighbors as positive pairs suffer from heterophilic neighbor interference (especially on heterophilic graphs). To adapt to graphs with arbitrary homophily ratios, we pioneer the concept of neutral pair for contrastive learning, as described in Definition \ref{df_neutral_pair}.
\begin{definition}[Neutral Pair]\label{df_neutral_pair}
Given a node $v_i \in \mathcal{V}$ with its neighbor set $\mathcal{N}_i$ in graph $\mathcal{G}= (\mathcal{V}, \mathcal{E})$ and one of its neighbors $v_k \in \mathcal{N}_i$, the sample pair ($v_i$, $v_k$) forms a neutral pair in our contrastive learning framework. Distinct from conventional positive/negative pairs, each neutral pair is treated as a partial positive pair weighted by $w \in [0, 1]$, dynamically adjusting its contribution during contrastive learning.
\end{definition}

The neutral pair extends the concept of positive pair in traditional GCL without being fully equivalent. With respect to the weight $w$, simply treating it as a hyper-parameter lacks flexibility and is inefficient. A more effective strategy is to learn adaptive weights tailored to each graph. In this paper, we predefine a neutral contrastive factor, akins to a simulated homophily ratio, for coarse-grained weighting of all neutral pairs in Section \ref{sec_NCA}, and leverage edge-specific weights, for fine-grained weighting of individual neutral pairs in Section \ref{sec_AFC}. We integrate these two strategies into a unified framework to better exploit reliable neighborhood structural information, ultimately enhancing the model’s representation learning capability.

\subsection{Graph Neighborhood Congener Ratio}\label{sec_congener_ratio}
% A node's neighbors may contain different semantic information, while non-neighboring nodes may exhibit identical semantic information. We refer to such nodes, which share the same semantic information, as congeners. 
In a graph, we refer to nodes belonging to the same category as \textit{congeners}.
Based on this, we first propose a property for graph data, namely graph neighborhood congener ratio, denoted by $\delta$, which measures the average proportion of nodes' congeners are connected within their neighborhood in a graph.
\begin{definition}[Graph Neighborhood Congener Ratio]\label{df_delta}
Consider a node $v_i \in \mathcal{V}$ with its neighbor set $\mathcal{N}_i$, its neighborhood congener ratio is the fraction of congener number in neighbor set $\mathcal{N}_i$ to that in the whole node set $\mathcal{V}$ of $\mathcal{G}$. The graph neighborhood congener ratio $\delta$ is the average of all nodes' neighborhood congener ratio:
\begin{equation}\label{eq_delta}
    \delta  = \frac{1}{|\mathcal{V}|} \sum_{v_i \in \mathcal{V}} \frac{|\{v_k : v_k \in \mathcal{N}_i \wedge y_i = y_k\}|}{|\{v_j : v_j \in \mathcal{V} \wedge y_i = y_j\}|}.
\end{equation}
\end{definition}

The graph neighborhood congener ratio $\delta$ reflects the percentage that the nodes have the same semantic information are connected as neighbors. The larger value of $\delta$, the higher average proportion of neighboring congener nodes within the entire graph.
In practice, as shown in Table \ref{tb_Statistics}, the graph neighborhood congener ratios are remarkably low for most real-world graphs. Therefore, exploring additional semantically similar nodes for each node to construct neutral pairs is crucial for effective neighborhood contrastive learning.

\section{Methodology}
In this section, we elaborate on the proposed NeuCGC, including the pseudo-Siamese encoders (Section \ref{sec_encoders}), the global feature distribution alignment (Section \ref{sec_GDA}), the neutral contrastive distribution alignment (Section \ref{sec_NCA}) and the adaptive feature consistency neutral contrastive learning (Section \ref{sec_AFC}) modules. An overview of it is shown in Fig. \ref{fig_overview}.

\subsection{Pseudo-Siamese Encoders for Embedding Learning}
\label{sec_encoders}
Contrastive graph clustering generally learns more generalizable node representations by generating and contrasting different augmented views of the graph. However, some studies \cite{2022DSSL-xiao-decoupled, 2022AFGRL-Lee} have shown that improper augmentations can lead to semantic drift and degrade performance on heterophilic graphs. Additionally, graphs with low homophily can result in indistinguishable representations due to the dependency of graph neural networks (GNNs) on graph structure aggregation, which is more effective on homophilic graphs \cite{2022Hete-zheng},\cite{2025R4-li-hypergraph},\cite{2024R4-li-permutation}.

To address these challenges, and following the approach in \cite{2023CCGC-Yang}, we employ pseudo-Siamese encoders with dual parameter-unshared multilayer perceptrons (MLPs) for embedding learning. This technique helps mitigate these issues by extracting feature representations directly from node attributes to the latent space, which could be formally defined as:
\begin{equation}\label{eq_encoders}
\begin{split}
    &\textbf{Z}^{(l)} = \mathcal{F}_{\mathbf{\Theta}^{(l)}}({\textbf{X}}),
    % ~\textbf{Z}^{(2)} = \mathcal{F}_{\mathbf{\Theta}^{(2)}}({\textbf{X}})  
    \\s.t.~&\mathbf{\Theta}^{(1)} \cap \mathbf{\Theta}^{(2)} = \varnothing ,
\end{split}
\end{equation}
\noindent where $l\in \{1,2\}$, the same applies hereafter. $\textbf{Z}^{(1)}, \textbf{Z}^{(2)} \in \mathbb{R}^{N \times d}$ denote the learned latent graph embeddings from corresponding encoder, with different parameters \(\mathbf{\Theta}^{(1)}\) and \(\mathbf{\Theta}^{(2)}\) respectively, and $d$ is the dimension of the latent embeddings. 
%The smoothed attribute matrix $\widetilde{\textbf{X}}= (\textbf{I}-\widetilde{\textbf{L}})^t\textbf{X}$ is used as the orginal node feature and is obtained using the graph Laplacian filter, with $t\in\{0, 1, 2\}$ indicating the number of filters \cite{2020AGE-cui}.
As a result, $\mathcal{F}_{\mathbf{\Theta}^{(1)}}$ and $\mathcal{F}_{\mathbf{\Theta}^{(2)}}$ capture different semantic information from the original node attributes.
We then develop both global and local distribution alignment modules to distill consistent information from these two views and adaptive feature consistency neutral contrastive learning, thereby enhancing representation learning.

\subsection{Global Feature Distributions Alignment}
\label{sec_GDA}
The goal of contrastive learning is to align the latent semantic properties of two views, $\textbf{Z}^{(1)}$ and $\textbf{Z}^{(2)}$, which capture different semantic information. This process aims to reduce the discrepancy between these distributions, encouraging the model to learn more general and consistent information. 
To achieve this, we first introduce Global Feature Distribution Alignment (GDA) module, ensuring that global consistent information is extracted from the two semantically separated views. Specifically, we treat the representations of all nodes in each view as a global probability distribution, denoted by $p_{\mathbf{\Theta}^{(l)}}(\textbf{Z}^{(l)}|\textbf{X})$, 
% \begin{equation}\label{p_view}
% \begin{split}
%     \textbf{P}^l &= \frac{exp(\textbf{Z}^l)}{\sum_i^N\sum_j^h exp(\textbf{Z}^l_{ij})}, ~l\in \{1,2\}.
% \end{split}
% \end{equation}
% \begin{equation}\label{p_view}
% \begin{split}
%     p_{\mathbf{\Theta}^{(l)}}(\textbf{Z}^{(l)}|\textbf{X}) 
%     % &= \frac{exp(\mathcal{F}_{\mathbf{\Theta}^{(l)}}(\textbf{X}))}{\sum exp(\mathcal{F}_{\mathbf{\Theta}^{(l)}}(\textbf{X}))}  \\
%     &= \frac{\mathrm{exp}(\textbf{Z}^{(l)})}{\sum_i^N\sum_j^d \mathrm{exp}(\textbf{Z}^{(l)}_{ij})}. 
%     % \\ &s.t.~l\in \{1,2\}.
% \end{split}
% \end{equation}
Additionally, $\textbf{Z}^{(l)}=[\textbf{z}^{(l)}_1, \textbf{z}^{(l)}_2, ..., \textbf{z}^{(l)}_N]$ and $p_{\mathbf{\Theta}^{(l)}}(\textbf{z}^{(l)}_i|\textbf{x}_i)$ is treated as a separate node probability distribution of latent representation $\textbf{z}^{(l)}_i \in \mathbb{R}^{d}$ for node $v_i$.
% \begin{equation}\label{p_node}
% \begin{split}
%     \textbf{p}^l_i &= softmax(\textbf{Z}_i^{l}), ~l\in \{1,2\}.
% \end{split}
% \end{equation}
% \begin{equation}\label{p_node}
% \begin{split}
%     p_{\mathbf{\Theta}^{(l)}}(\textbf{z}^{(l)}_i|\textbf{x}_i) &= \mathrm{softmax}(\textbf{z}_i^{l}) 
%     = \frac{\mathrm{exp}(\textbf{z}^{(l)}_i)}{\sum_j^d \mathrm{exp}(\textbf{z}^{(l)}_{ij})}. 
%     %\\ s.t.~&l\in \{1,2\}.
% \end{split}
% \end{equation}
% where $\tau$ is a temperature factor, set to a fixed value of 0.1 in our method.

Since the direction of divergence between two embedding distributions is not critical in our model, it is more reasonable to minimize the recently widely used Symmetric Kullback-Leibler (SKL) divergence \cite{2021SKL-RDrop},\cite{2021SKL-HGIB},\cite{2023SKL-GIST},\cite{2024SymKL-Yao}, $D_{SKL}(P,Q) = D_{KL}(P||Q) + D_{KL}(Q||P)$ where $P$ and $Q$ are two probability distributions and $D_{KL}(\cdot)$ is the KL divergence \cite{1951KL-kullback}, between them in a bidirectional manner. This approach aligns the feature distributions of the two views by minimizing the divergence between them. Specifically, we aim to align both the global representation space and the node-specific representation space across the two views by minimizing the SKL divergence between $p_{\mathbf{\Theta}^{(1)}}(\textbf{Z}^{(1)}|\textbf{X}) $ and $p_{\mathbf{\Theta}^{(2)}}(\textbf{Z}^{(2)}|\textbf{X}) $, as well as between $p_{\mathbf{\Theta}^{(1)}}(\textbf{z}^{(1)}_i|\textbf{x}_i)$ and $p_{\mathbf{\Theta}^{(2)}}(\textbf{z}^{(2)}_i|\textbf{x}_i)$. 
% For two global probability distributions $\textbf{P}, \textbf{Q} \in \mathbb{R}^{N \times d}$, the view-level SKL divergence between them can be defined as follows,
% \begin{equation}\label{eq_view_KL}
% 	D^v_{SKL}(\textbf{P}, \textbf{Q}) =  \sum_i^N \sum_j^d (\textbf{P}_{ij} \mathrm{log} (\frac{\textbf{P}_{ij}}{\textbf{Q}_{ij}}) + \textbf{Q}_{ij} \mathrm{log} (\frac{\textbf{Q}_{ij}}{\textbf{P}_{ij}})).
% \end{equation}
% Similarly, for two node probability distributions $\textbf{p}, \textbf{q} \in \mathbb{R}^{d}$, the node-level SKL divergence between them is
% \begin{equation}\label{eq_pairwise-KL}
% 	D^n_{SKL}(\textbf{p}, \textbf{q}) = \sum_j^d (\textbf{p}_j \mathrm{log} (\frac{\textbf{p}_j}{\textbf{q}_j}) + \textbf{q}_j \mathrm{log} (\frac{\textbf{q}_j}{\textbf{p}_j})).
% \end{equation}
Subsequently, the global feature distribution alignment loss can be defined as:
\begin{equation}\label{eq_L_GDA}
\begin{split}
    \mathcal{L}_{GDA}= & D_{SKL}(p_{\mathbf{\Theta}^{(1)}}(\textbf{Z}^{(1)}|\textbf{X}), p_{\mathbf{\Theta}^{(2)}}(\textbf{Z}^{(2)}|\textbf{X}) ) \\ 
    &+ \sum_{v_i \in \mathcal{V}} D_{SKL}(p_{\mathbf{\Theta}^{(1)}}(\textbf{z}^{(1)}_i|\textbf{x}_i), p_{\mathbf{\Theta}^{(2)}}(\textbf{z}^{(2)}_i|\textbf{x}_i)).
\end{split}
\end{equation}

By minimizing $\mathcal{L}_{GDA}$, the model is encouraged to achieve global mutual distribution alignment between the two views of the latent embeddings. This technique helps alleviate the inconsistency between $\textbf{Z}^{(1)}$ and $\textbf{Z}^{(2)}$, ensuring information sharing from a global perspective.

\subsection{Neutral Contrastive Distribution Alignment}
\label{sec_NCA}
A key limitation of existing neighbor-based CGC methods is their reliance on the homophily assumption, where each node and its neighbors are treated as positive pairs. This becomes problematic in low homophily graphs, where connected nodes may belong to different classes.
To address this, we first introduce neutral pair among neighbor nodes in Definition \ref{df_neutral_pair}, and propose a neutral contrastive learning mechanism. The neutral pairs are adaptively selected as positive pairs based on the graph’s homophily level. In low homophily graphs, neutral pairs contribute less to contrastive learning, while in high homophily graphs, they contribute more. We treat neighbors as neutral pairs, distinguishing them from positive pairs, and apply a \textit{neutral contrastive factor (NCF)} ranging from 0 to 1 to weight neutral pairs as positive. This mechanism completely enhances consistency for positive pairs, partially strengthens neutral pairs, and reduces that for negative pairs, encouraging the model to learn discriminative features.

Building on this motivation, in addition to achieving global alignment between the semantically separated feature distributions, we also focus on node neighborhoods from a local perspective, informed by the graph's homophily level. To achieve this, we define the neighborhood distribution neutral contrastive alignment (NCA) loss as follows:
\begin{equation}\label{eq_L_NCA}
    \mathcal{L}_{NCA}=\frac{1}{|\mathcal{V}|} \sum_{v_i \in \mathcal{V}}\frac{\frac{1}{|\mathcal{N}_i|+1} (\textbf{K}_{ii} + \eta \sum_{v_k \in {\mathcal{N}_i}} \textbf{K}_{ik})}
{\frac{1}{|\mathcal{V}|-1}\sum_{v_j\in \mathcal{V} \wedge  j \ne i}\textbf{K}_{ij}},
\end{equation}
where $\eta$ is the neutral contrastive factor and $\textbf{K} \in \mathbb{R}^{N \times N}$ is the node-level pairwise cross-view SKL divergence matrix, defined as:
\begin{equation}\label{eq_cmp_K}
    \textbf{K}_{ij}=D_{SKL}(p_{\mathbf{\Theta}^{(1)}}(\textbf{z}^{(1)}_i|\textbf{x}_i), p_{\mathbf{\Theta}^{(2)}}(\textbf{z}^{(2)}_j|\textbf{x}_j)).
\end{equation}
Due to the non-negativity of KL and SKL divergences, for nodes $v_i, v_j \in \mathcal{V}$, we have $\textbf{K}_{ij} > 0$.
The numerator of $\mathcal{L}_{NCA}$ represents the average SKL divergence between positive pairs and weighted neutral pairs, while the denominator captures the divergence between non-positive pairs.

The key challenge of the proposed neighborhood distribution neutral contrastive alignment lies in estimating the neutral contrastive factor $\eta$.
To address this, we propose a \textbf{Neutral Contrastive Factor Estimation (NCFE)} strategy. Specifically, we define $\textbf{S}_{ij}$ as cross-view similarity between nodes $v_i$ and $v_j$, based on embeddings $\textbf{Z}^{(1)}$ and $\textbf{Z}^{(2)}$: 
\begin{equation}\label{eq_cosine_sim}
	\textbf{S}_{ij}=\frac{\mathbf{Z}^{(1)}_i (\mathbf{Z}^{(2)}_j)^{\top}}{||\mathbf{Z}^{(1)}_i||_2||\mathbf{Z}^{(2)}_j||_2}.
\end{equation}
%where $\textbf{S}{ij}$ measures the cosine similarity between the two embedding vectors, ensuring a normalized value within the range $[-1, 1]$. This similarity serves as the basis for adaptively estimating $\eta$, reflecting the degree of alignment between nodes across views.
We compute the min-max scaled similarity $\textbf{S}^{\mathcal{N}} \in \mathbb{R}^{N \times N}$ for the original neighbors in $\textbf{A}$ as:
\begin{equation}\label{eq_cmp_SN} \textbf{S}^{\mathcal{N}}=\mathrm{norm}(\textbf{S}) \odot \textbf{A}, \end{equation}
\noindent where $\mathrm{norm}(\cdot)$ is the min-max normalization, and $\odot$ denotes the Hadamard product. Here, $\textbf{S}^{\mathcal{N}}_{ik} \in [0, 1]$ quantifies the semantic similarity between node $v_i$ and its neighbor $v_k \in \mathcal{N}_i$.
Higher similarity typically implies stronger semantic consistency. Based on this, we hypothesize that within a node's neighborhood, neighbors with higher similarity are more likely to belong to the same category. Specifically, if $\textbf{S}^{\mathcal{N}}_{ik} \ge \xi$ (where $\xi$ is a predefined threshold), $v_k$ is assumed to share the same category as $v_i$.
The neutral contrastive factor $\eta$ is then calculated as the average fraction of same-category neighbors within each node's neighborhood in a \textit{coarse-grained} manner:
\begin{equation}\label{eq_cmp_eta}
	\eta =\frac{1}{|\mathcal{V}|} \sum_{v_i \in \mathcal{V}} \frac{\sum_{k\ne i} \mathds{1}(v_k \in \mathcal{N}_i\wedge \textbf{S}^{\mathcal{N}}_{ik} \ge \xi )}{|\mathcal{N}_i|}, 
\end{equation}
\noindent where $\mathds{1}(\cdot)$ represents the indicator function. The threshold $\xi$ determines the similarity level at which two neighbors are deemed to share semantic label, adaptively defined as:
\begin{equation}
\xi = \frac{\mathrm{tr}(\mathrm{norm}(\textbf{S}))}{|\mathcal{V}|},
\end{equation}
\noindent where $\mathrm{tr}(\cdot)$ denotes the trace operator, which computes the sum of main diagonal elements of a square matrix.

The objective of the NCA loss is to align the probability distributions of positive pairs and weighted neutral pairs by minimizing the numerator while pushing apart the distributions of non-positive pairs by maximizing the denominator. By minimizing $\mathcal{L}_{NCA}$, the model achieves neighborhood feature distribution alignment and facilitates information interaction from a local perspective, promoting the learning of more general semantic information.

% In this formulation, we aim for $\textbf{S}^{\mathcal{N}}$ to reflect consistent self-supervised information indicating that two neighboring nodes are likely to belong to the same class, making it a more reliable representation of semantic relationships than the original graph. However, $\textbf{S}^{\mathcal{N}}$ heavily relies on the similarity of the predicted representations between true neighbors. Therefore, it is crucial to enhance the model's ability to accurately learn these similarity. 

\subsection{Adaptive Feature Consistency Neutral Contrastive Learning}
\label{sec_AFC}
To enhance structural consistency in $\textbf{Z}^{(1)}$ and $\textbf{Z}^{(2)}$ at the feature level, we propose Adaptive Feature Consistency Neutral Contrastive Learning (AFC) by considering the graph neighborhood congener ratio. As highlighted in Section \ref{sec_congener_ratio}, real-world graphs often have low neighborhood congener ratios, meaning a large number of congeners exist outside a node's immediate neighborhood. Including these distant congeners as neighbors can mitigate the negative impact of low homophily and significantly improve the model's ability to learn consistent features.
For example, in the original graph $\textbf{A}$ illustrated in Fig. \ref{fig_overview}, node $v_1$ possesses additional, albeit not immediately visible, semantically similar distant neighbors such as $v_7$, $v_8$, $v_9$, and others, which can be identified and incorporated as new neighbors. These connections facilitate more effective learning of consistent information.

To achieve this, we expand the neutral pairs to enhance the model's discriminative power. Specifically, clustering results $\textbf{c} \in \mathbb{R}^{N}$ are generated by applying the K-means algorithm to the concatenated graph embeddings:
\begin{equation}\label{eq_fusion}
	\textbf{Z} = [\textbf{Z}^{(1)}, \textbf{Z}^{(2)}].
\end{equation}
\noindent We then select the top $\textit{k}$ high-confidence pseudo labels $\textbf{c}^h \in \mathbb{R}^{k \cdot N}$, where $\textit{k} \in (0, 1)$ is a confidence signal to reconstruct the neighborhood for each node. The confidence is measured by the distance to the cluster center. Additionally, we assign a learned weight to each of the original neighbors. Given a node pair $v_i, v_j$ where $i \ne j$, the high-confidence graph $\textbf{H} \in \mathbb{R}^{N \times N}$ can be defined as:
\begin{equation}\label{eq_cmp_H}
    \textbf{H}_{ij} = \begin{cases}
    1, & \textbf{c}^h_i = \textbf{c}^h_j, \\
    \mathrm{norm}(\textbf{S}_{ij}),~ & \textbf{A}_{ij}=1 \wedge \textbf{c}^h_i \ne \textbf{c}^h_j, \\
    0, & otherwise,
\end{cases}
\end{equation}
\noindent where $\textbf{S}_{ij}$ is the cosine similarity between paired latent node embeddings. $\textbf{H}_{ij}$ ranges from 0 to 1. Specifically, if $0 < \textbf{H}_{ij} \le 1$, then $v_i$ and $v_j$ are either high-confidence neighbors or original neighbors, and are treated as a weighted neutral pair. Otherwise, they are considered negative pair. 
Using the high-confidence graph $\textbf{H}$, we identify neutral pairs and define \textit{fine-grained} neutral contrastive factors via edge weights. Thus, by modifying the widely used InfoNCE loss \cite{2018CPC-oord}, the adaptive feature consistency neutral contrastive loss is given by: 
\begin{equation}\label{eq_L_AFC}
\begin{split}
    \mathcal{L}_{AFC}= \frac{1}{|\mathcal{V}|} \sum_{v_i \in \mathcal{V}} 
    -\mathrm{log}
    \frac{\mathrm{exp}(\textbf{S}_{ii}) 
    + \sum_{v_k \in \widehat{\mathcal{N}}_i} \textbf{H}_{ik} \mathrm{exp}(\textbf{S}_{ik}) } {\sum_{v_j \in \mathcal{V}} \mathrm{exp}(\textbf{S}_{ij})},
\end{split}
\end{equation}
\noindent where $\widehat{\mathcal{N}}_i$ is the high-confidence neighbor set of $v_i$ in $\textbf{H}$. 

By minimizing $\mathcal{L}_{AFC}$, we maximizing the similarity of positive pairs, partially maximizing those of neutral pairs, while minimizing those of negative pairs.
Therefore, by leveraging reliable self-supervised signals, AFC enhances the model's discriminative power and consistency in feature learning, resulting in higher-quality graph embeddings.
\begin{algorithm}[!t]
\caption{Optimization of NeuCGC}
\label{algorithm}
% \textbf{Input}: the input graph $\mathcal{G}$, including node attributes $\textbf{X}$ and structure $\textbf{A}$; hyper-parameters $\textit{k}, \lambda_1, \lambda_2$; the iteration number $I$.   \\
\textbf{Input}: the input graph $\mathcal{G} = \{\textbf{X}, \textbf{A}\}$; cluster number $C$; hyper-parameters $\textit{k}, \lambda_1, \lambda_2$; iteration number $I$.

\begin{algorithmic}[1] %[1] enables line numbers
% \STATE Obtain the smoothed attributes $\widetilde{\textbf{X}}$ with Laplacian filter.
\WHILE{epoch = 1, 2, ... $I$}
\STATE Obtain the latent embeddings $\textbf{Z}^{(1)}, \textbf{Z}^{(2)}$ by encoding $\mathcal{G}$ with \textcolor{blue}{(\ref{eq_encoders})}.
\STATE Calculate the GDA loss $\mathcal{L}_{GDA}$ with \textcolor{blue}{(\ref{eq_L_GDA})}.
\STATE Calculate $\eta$ with \textcolor{blue}{(\ref{eq_cmp_eta})}, the pairwise SKL divergence matrix $\textbf{K}$ with \textcolor{blue}{(\ref{eq_cmp_K})}, and the NCA loss $\mathcal{L}_{NCA}$ based on the original graph $\textbf{A}$ with \textcolor{blue}{(\ref{eq_L_NCA})}.
\STATE Obtain high-confidence pseudo labels $\textbf{c}^h$ from cluster assignment $\textbf{c}$ by performing K-means algorithm on the fused embeddings $\textbf{Z}$ with \textcolor{blue}{(\ref{eq_fusion})}, then calculate the high-confidence graph $\textbf{H}$ by \textcolor{blue}{(\ref{eq_cmp_H})}.
\STATE Calculate the AFC loss $\mathcal{L}_{AFC}$ based on $\textbf{H}$ with \textcolor{blue}{(\ref{eq_L_AFC})}.
\STATE Update the entire networks by minimizing the final objective $\mathcal{L}$.
\ENDWHILE
\end{algorithmic}
\textbf{Output}: Clustering results $\textbf{c}$.
% \vspace{-1.0em}
\end{algorithm}

\subsection{Optimization}
\label{sec_optimization}
In summary, we jointly optimize the three objectives above, and the final objective function can be written as:
\begin{equation}\label{eq_L_total}
\begin{split}
    \mathop{\min}\limits_{\mathbf{\Theta}^{(1)}, \mathbf{\Theta}^{(2)}} \mathcal{L} = \mathop{\min}\limits_{\mathbf{\Theta}^{(1)}, \mathbf{\Theta}^{(2)}} \mathcal{L}_{NCA} + \lambda_1 \mathcal{L}_{AFC} + \lambda_2 \mathcal{L}_{GDA},
\end{split}
\end{equation}
\noindent where $\lambda_1$ and $\lambda_2$ are factors for balancing the losses. 
Fig. \ref{fig_overview} provides an overview of the proposed framework NeuCGC, and its optimization algorithm can be found in Algorithm \ref{algorithm}. 

\noindent \textbf{Computational Complexity Analysis.}
The computational complexity of our approach could be discussed from two perspectives: the pseudo-Siamese networks and the three proposed objective functions. For the former, the time complexity is $\mathcal{O}(NDd)$, and the space complexity is $\mathcal{O}(Dd)$, where $D$ and $d$ are the dimensions of node attributes and latent embeddings respectively, and $N$ is the sample size. 
In terms of the objective functions, the complexity mainly depends on the computations of $\mathcal{L}_{GDA}$, $\mathcal{L}_{NCA}$ and $\mathcal{L}_{AFC}$.
Computing $\mathcal{L}_{GDA}$ requires both time and space complexities of $\mathcal{O}(Nd)$. The time and space complexities for computing $\textbf{S}$, $\textbf{S}^{\mathcal{N}}$, $\textbf{K}$, and $\textbf{H}$ are $\mathcal{O}(N^2d)$ and $\mathcal{O}(N^2)$, respectively. 
% Computing $\textbf{S}$ and $\textbf{S}^{\mathcal{N}}$ has a time complexity of $\mathcal{O}(N^2d)$ and space complexity of $\mathcal{O}(N^2)$, while computing $\textbf{K}$ has a time complexity of $\mathcal{O}(N^2d)$ and space complexity of $\mathcal{O}(N^2)$. 
These computations result in the time complexity of $\mathcal{O}(N^2d)$ and space complexity of $\mathcal{O}(N^2)$ for both $\mathcal{L}_{NCA}$ and $\mathcal{L}_{AFC}$. 

% Computing $\textbf{H}$ requires time complexities of $\mathcal{O}(N^2h)$ and $\mathcal{O}(N^2)$, respectively. 
% Consequently, the time complexity of $\mathcal{L}_{AFC}$ is $\mathcal{O}(N^2h)$ and its space complexity is $\mathcal{O}(N^2)$.
Consequently, the overall time and space complexities of NeuCGC are $\mathcal{O}(N^2d)$ and $\mathcal{O}(N^2)$, respectively. This is comparable to that of the classical InfoNCE loss.

\subsection{Discussion}
In this subsection, we provide analyses of why the proposed global and local feature distribution alignment schemes facilitate the model's representation learning, and why the introduced AFC module enhances feature consistency, respectively.

% \subsubsection{Why do the proposed global and local feature distribution alignment schemes facilitate the model's representation learning}
\textit{1) Why do the proposed global and local feature distribution alignment schemes facilitate representation learning?} 
% As aforementioned, $p_{\mathbf{\Theta}^{(1)}}(\textbf{Z}^{(1)}|\textbf{X})$ and $p_{\mathbf{\Theta}^{(2)}}(\textbf{Z}^{(2)}|\textbf{X})$, the output of the pseudo-Siamese encoders, are two semantically distinct feature distributions. 
As aforementioned, the output $\textbf{Z}^{(1)}$ and $\textbf{Z}^{(2)}$ of the pseudo-Siamese encoders are two semantically distinct feature distributions. They both capture common information and specific knowledge from the original graph attributes. Therefore, we perform distribution alignment between them with GDA and NCA objectives from global and local perspectives respectively, so as to achieve knowledge sharing. 
% The KL loss \cite{1951KL-kullback, 2015KD-Hinton} is a commonly used objective for knowledge distillation by minimizing the KL divergence between two different probability distributions. 
The KL loss \cite{1951KL-kullback, 2015KD-Hinton} is frequently selected as an effective technique for achieving knowledge distillation due to its minimization of the KL divergence between two distinct probability distributions.
Meanwhile, minimizing the SKL divergence \cite{2021SKL-RDrop},\cite{2021SKL-HGIB},\cite{2023SKL-GIST},\cite{2024SymKL-Yao} is equivalent to simultaneously minimizing the bidirectional KL divergence for two probability distributions $P$ and $Q$, formulated as:
\begin{equation}\label{eq_minSKL}
    \mathop{\min}\limits_{\mathbf{\Theta}^{(1)}, \mathbf{\Theta}^{(2)}} D_{SKL}(P, Q) = \mathop{\min}\limits_{\mathbf{\Theta}^{(1)}, \mathbf{\Theta}^{(2)}} D_{KL}(P || Q) + D_{KL}(Q || P).
\end{equation}
% \begin{equation}\label{eq_minSKL}
%     \resizebox{0.8\hsize}{!}{$\begin{aligned}
%     \mathop{\min}\limits_{\mathbf{\Theta}^{(1)}, \mathbf{\Theta}^{(2)}} D_{SKL}(P, Q) = \mathop{\min}\limits_{\mathbf{\Theta}^{(1)}, \mathbf{\Theta}^{(2)}} D_{KL}(P || Q) + D_{KL}(Q || P).
%     \end{aligned}$}
% \end{equation}
\noindent We minimize the SKL divergence between $\textbf{Z}^{(1)}$ and $\textbf{Z}^{(2)}$ to enable knowledge transfer with each other and ultimately achieve semantic information sharing. 

For the GDA objective, minimizing it is equivalent to minimizing the bidirectional KL divergence between different view distributions, 
% $\mathop{\min}\limits_{\mathbf{\Theta}^{(1)}, \mathbf{\Theta}^{(2)}} D_{KL}(p_{\mathbf{\Theta}^{(1)}}(\textbf{Z}^{(1)}|\textbf{X}) || p_{\mathbf{\Theta}^{(2)}}(\textbf{Z}^{(2)}|\textbf{X})) + D_{KL}(p_{\mathbf{\Theta}^{(2)}}(\textbf{Z}^{(2)}|\textbf{X}) || p_{\mathbf{\Theta}^{(1)}}(\textbf{Z}^{(1)}|\textbf{X}))$, 
as well as between different instance distributions of the same node.
% $\mathop{\min}\limits_{\mathbf{\Theta}^{(1)}, \mathbf{\Theta}^{(2)}} D_{KL}(p_{\mathbf{\Theta}^{(1)}}(\textbf{z}^{(1)}_i|\textbf{x}_i) || p_{\mathbf{\Theta}^{(2)}}(\textbf{z}^{(2)}_i|\textbf{x}_i)) + D_{KL}(p_{\mathbf{\Theta}^{(2)}}(\textbf{z}^{(2)}_i|\textbf{x}_i) || p_{\mathbf{\Theta}^{(1)}}(\textbf{z}^{(1)}_i|\textbf{x}_i))$. 
By this way, the global consistency is strengthened.
% $p_{\mathbf{\Theta}^{(1)}}(\textbf{Z}^{(1)}|\textbf{X})$ and $p_{\mathbf{\Theta}^{(2)}}(\textbf{Z}^{(2)}|\textbf{X})$, as well as between node distributions $p_{\mathbf{\Theta}^{(1)}}(\textbf{z}^{(1)}_i|\textbf{x}_i)$ and $p_{\mathbf{\Theta}^{(2)}}(\textbf{z}^{(2)}_i|\textbf{x}_i)$. By this way, 
Regarding the NCA objective, we consider different instance distributions of the same node $(v^{(1)}_i, v^{(2)}_i)$ as positive pair, each node and its original neighbors $(v^{(1)}_i, v^{(2)}_k)$ as neutral pairs, while each node and its non-neighbors $(v^{(1)}_i, v^{(2)}_j)$ as negative pairs. Since $\mathcal{L}_{NCA}$ is proportional to the positive pairs and the weighted neutral pairs, minimizing this objective is equivalent to minimizing the bidirectional KL divergence between positive pair and partially between neutral pairs, while maximizing that between negative pairs. Thus, we promote information interaction within each node's neighborhood and the local neighborhood consistency is improved.
% $\mathop{\min}\limits_{\mathbf{\Theta}^{(1)}, \mathbf{\Theta}^{(2)}} D_{KL}(p_{\mathbf{\Theta}^{(1)}}(\textbf{z}^{(1)}_i|\textbf{x}_i) || p_{\mathbf{\Theta}^{(2)}}(\textbf{z}^{(2)}_i|\textbf{x}_i)) + D_{KL}(p_{\mathbf{\Theta}^{(2)}}(\textbf{z}^{(2)}_i|\textbf{x}_i) || p_{\mathbf{\Theta}^{(1)}}(\textbf{z}^{(1)}_i|\textbf{x}_i)) + \eta (D_{KL} (p_{\mathbf{\Theta}^{(1)}} (\textbf{z}^{(1)}_i|\textbf{x}_i) || p_{\mathbf{\Theta}^{(2)}} (\textbf{z}^{(2)}_k|\textbf{x}_k)) + D_{KL}(p_{\mathbf{\Theta}^{(2)}} (\textbf{z}^{(2)}_k|\textbf{x}_k) || p_{\mathbf{\Theta}^{(1)}} (\textbf{z}^{(1)}_i|\textbf{x}_i)))$. 
Through global and local distribution alignment, we encourage the model to capture more comprehensive semantic information and achieve more consistent representations.

% \subsubsection{Why does the introduced AFC module enhance feature consistency}
\textit{2) Why does the introduced AFC module enhance feature consistency?}
The AFC module expand neutral pairs via discovering high-quality neighbors, and each neutral pair is weighted with a fine-grained neutral contrastive factor. This self-supervised signal ensures that the model learns richer and more generalized consistency features.
Here, we analyze the optimization of $\mathcal{L}_{AFC}$ from the perspective of mutual information maximization, which has been widely employed in previous studies \cite{2018CPC-oord},\cite{2019DGI-velickovic},\cite{2019I-NCE-poole19a},\cite{2019MImax-Bachman},\cite{2020GRACE-Zhu},\cite{2020MuImax-Tschannen},\cite{2021GCA-R2-zhu}.
Let two random variables $\textbf{Z}^{(1)}, \textbf{Z}^{(2)}$ denote as the latent node representations output by the two pseudo-Siamese encoders respectively, with $\textbf{X}$ as their input graph attributes. 
We have the following theorem:
% We demonstrate that optimizing the proposed AFC loss is equivalent to maximizing the consistent mutual information $I(\textbf{X};\textbf{Z}^{(1)}, \textbf{Z}^{(2)})$ between node attributes $\textbf{X}$ and the latent node representations:
\begin{theorem}[]\label{thm:MI}
Minimizing $\mathcal{L}_{AFC}$ in \textcolor{blue}{(\ref{eq_L_AFC})} incorporates the maximization of the InfoNCE objective $I_{NCE}$, which is equivalent to maximizing mutual information between the original attributes $\mathbf{X}$ and the two latent representations $\mathbf{Z}^{(1)}$ and $\mathbf{Z}^{(2)}$:
\begin{equation}\label{eq_theorem}
\begin{split}
    -\mathcal{L}_{AFC} \le I_{NCE} \le I(\mathbf{X};\mathbf{Z}^{(1)}, \mathbf{Z}^{(2)}).
\end{split}
\end{equation}
Therefore, optimizing the AFC objective can lead to superior representations compared to the InfoNCE objective.
\end{theorem}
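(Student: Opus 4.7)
The plan is to establish the chain $-\mathcal{L}_{AFC} \le I_{NCE} \le I(\mathbf{X};\mathbf{Z}^{(1)},\mathbf{Z}^{(2)})$ by proving each inequality separately, treating the right one as a direct invocation of the classical InfoNCE variational bound and the left one as a structural comparison between $\mathcal{L}_{AFC}$ and the standard InfoNCE loss. Throughout, I will view $\mathbf{X}$ as the anchor random variable and the paired outputs $(\mathbf{Z}^{(1)}, \mathbf{Z}^{(2)})$ of the pseudo-Siamese encoders as the other argument of the mutual information.

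For the right-hand inequality $I_{NCE} \le I(\mathbf{X};\mathbf{Z}^{(1)},\mathbf{Z}^{(2)})$, I would invoke the Oord et al.\ InfoNCE bound as stated in the cited works \cite{2018CPC-oord, 2019I-NCE-poole19a, 2019MImax-Bachman, 2020MuImax-Tschannen}. The derivation proceeds via the KL characterization of mutual information together with the standard noise contrastive estimation argument: choosing one positive pair drawn from the joint $p(\mathbf{x}, \mathbf{z}^{(1)}, \mathbf{z}^{(2)})$ and $|\mathcal{V}|-1$ negatives from the marginal yields a variational lower bound whose tightness depends on the expressiveness of the critic. Because $(\mathbf{Z}^{(1)}, \mathbf{Z}^{(2)})$ is jointly a deterministic function of $\mathbf{X}$ through $\mathcal{F}_{\mathbf{\Theta}^{(1)}}$ and $\mathcal{F}_{\mathbf{\Theta}^{(2)}}$, this step essentially reduces to citation and requires no new argument.

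For the left-hand inequality $-\mathcal{L}_{AFC} \le I_{NCE}$, I would rewrite each summand of $-\mathcal{L}_{AFC}$ as
$\log\frac{\exp(\mathbf{S}_{ii}) + \sum_{k\in \widehat{\mathcal{N}}_i} \mathbf{H}_{ik}\exp(\mathbf{S}_{ik})}{\sum_{j\in\mathcal{V}} \exp(\mathbf{S}_{ij})}$
and factor out the pure positive-pair contribution so that the remaining neutral mass can be matched against the InfoNCE denominator. Using $\mathbf{H}_{ik} \in [0,1]$ together with the fact that $\widehat{\mathcal{N}}_i \subset \mathcal{V}\setminus\{i\}$, the weighted neutral numerator is majorized by a convex combination of negative-pair scores. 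An application of Jensen's inequality on the concave $\log$, followed by averaging over $\mathcal{V}$, should then collapse the expression onto the standard InfoNCE estimator $I_{NCE}$. Equivalently, one can re-express the weighted sum as an expectation under a refined positive distribution induced by the high-confidence graph $\mathbf{H}$, and apply the NCE derivation with this modified positive density, which by the data-processing style inequality is dominated by the unconstrained $I_{NCE}$.

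The main obstacle will be the left-hand inequality, because the weighted neutral terms add non-negative mass to the numerator and naively push $-\mathcal{L}_{AFC}$ upward rather than downward. The key technical maneuver is therefore to exhibit the neutral contribution as a valid (sub-)probability reweighting of the positive distribution so that the Oord et al.\ machinery applies with a properly normalized critic; the constraints $\mathbf{H}_{ik}\in[0,1]$ and the symmetry of the cosine kernel $\mathbf{S}$ from \textcolor{blue}{(\ref{eq_cosine_sim})} are precisely what make such a reweighting admissible. Once this normalization is in place, the two inequalities concatenate to give \textcolor{blue}{(\ref{eq_theorem})}, and the concluding remark about AFC yielding superior representations follows because minimizing $\mathcal{L}_{AFC}$ tightens a lower bound on $I(\mathbf{X};\mathbf{Z}^{(1)},\mathbf{Z}^{(2)})$ that is strictly more informative than the plain InfoNCE estimator.
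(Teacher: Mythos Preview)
Your treatment of the right-hand inequality is fine in spirit, though note that the paper actually routes it through an intermediate step: the critic $\theta_{ij}=\mathbf{S}_{ij}$ acts on the pair $(\mathbf{Z}^{(1)},\mathbf{Z}^{(2)})$, so the cited InfoNCE bound first gives $I_{NCE}\le I(\mathbf{Z}^{(1)};\mathbf{Z}^{(2)})$, and only then does one invoke $I(\mathbf{Z}^{(1)};\mathbf{Z}^{(2)})\le I(\mathbf{X};\mathbf{Z}^{(1)},\mathbf{Z}^{(2)})$ (a data-processing consequence, as in \cite{2020GRACE-Zhu}). Your shortcut that collapses these two steps into one citation is harmless but obscures where the dependence on $\mathbf{X}$ actually enters.

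The real problem is the left-hand inequality. You correctly flag the obstacle---the neutral mass in the numerator pushes $-\mathcal{L}_{AFC}$ \emph{up}, not down---but neither of your proposed fixes resolves it. Jensen on the concave $\log$ goes the wrong way: writing the numerator as a (sub-)convex combination and applying $\log(\mathbb{E}[\cdot])\ge \mathbb{E}[\log(\cdot)]$ yields a \emph{lower} bound on each summand, whereas you need an upper bound to reach $I_{NCE}$. Your alternative---recasting the neutral weights as a refined positive distribution and appealing to a ``data-processing style inequality''---compares mutual informations, not the two NCE estimators themselves; there is no general principle that a multi-positive InfoNCE with reweighted positives is dominated by the single-positive $I_{NCE}$.

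The paper's argument is far more elementary and does not use Jensen or data processing at this step at all. It simply (i) replaces each weight $\mathbf{H}_{ik}\in[0,1]$ by $1$, which is a valid upper bound on the numerator, and then (ii) bounds
\[
\exp(\theta_{ii}) + \sum_{v_k\in\widehat{\mathcal{N}}_i}\exp(\theta_{ik}) \;\le\; |\mathcal{V}|\,\exp(\theta_{ii}),
\]
justified heuristically by graph sparsity (so that $|\widehat{\mathcal{N}}_i|\ll |\mathcal{V}|$). This directly matches the InfoNCE numerator $|\mathcal{V}|\exp(\theta_{ii})$ in \textcolor{blue}{(\ref{eq_INCE})}. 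Your plan should replace the Jensen/data-processing machinery with this two-line crude bounding; it is less elegant, and the sparsity step is admittedly not fully rigorous, but it is what actually closes the gap in the paper's sense.
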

\begin{proof}\renewcommand{\qedsymbol}{} 
Define the critic function as $\theta(\textbf{z}^{(1)}_i, \textbf{z}^{(2)}_j) = \theta_{ij} = \textbf{S}_{ij}$. Then, the negative AFC objective $-\mathcal{L}_{AFC}$ can be rewritten as 
\begin{equation}\label{L_AFC_rewritten}
\begin{split}
    -\mathcal{L}_{AFC} &= \mathds{E} \left [ \frac{1}{|\mathcal{V}|} \sum_{v_i \in \mathcal{V}} 
    \mathrm{log} \frac{\mathrm{exp}(\theta_{ii}) 
    + \sum_{v_k \in \widehat{\mathcal{N}}_i} \textbf{H}_{ik} \mathrm{exp}(\theta_{ik}) } {\sum_{v_j \in \mathcal{V}} \mathrm{exp}(\theta_{ij})} \right ] \\ 
    &\le \mathds{E} \left [ \frac{1}{|\mathcal{V}|} \sum_{v_i \in \mathcal{V}} 
    \mathrm{log} \frac{\mathrm{exp}(\theta_{ii}) 
    + \sum_{v_k \in \widehat{\mathcal{N}}_i} \mathrm{exp}(\theta_{ik}) } {\sum_{v_j \in \mathcal{V}} \mathrm{exp}(\theta_{ij})} \right ] \\
    &\le \mathds{E} \left [ \frac{1}{|\mathcal{V}|} \sum_{v_i \in \mathcal{V}} 
    \mathrm{log} \frac{|\mathcal{V}| \mathrm{exp}(\theta_{ii}) 
    } {\sum_{v_j \in \mathcal{V}} \mathrm{exp}(\theta_{ij})} \right ].
\end{split}
\end{equation}
\noindent These two inequalities hold because we take into account the neutral pairs with $0 \le \textbf{H}_{ij} \le 1$ and most of the real-world graphs are sufficiently sparse, respectively. In addition, according to \cite{2018CPC-oord, 2019I-NCE-poole19a, 2020MuImax-Tschannen}, the InfoNCE objective can be defined as
\begin{equation}\label{eq_INCE}
\begin{split}
    I_{NCE} \triangleq \mathds{E} \left [ \frac{1}{N} \sum_{i=1}^N 
    \mathrm{log} \frac{\mathrm{exp}(\theta(\textbf{z}^{(1)}_i, \textbf{z}^{(2)}_i))} {\frac{1}{N} \sum_{j=1}^N \mathrm{exp}(\theta(\textbf{z}^{(1)}_i, \textbf{z}^{(2)}_j))} \right ],
\end{split}
\end{equation}
which is the lower bound of the mutual information of representations $\textbf{Z}^{(1)}, \textbf{Z}^{(2)}$, namely,
\begin{equation}\label{eq_INCE_IZ}
    I_{NCE} \le I(\textbf{Z}^{(1)}, \textbf{Z}^{(2)}).
\end{equation}
Note that $|\mathcal{V}|=N$, and this leads to the following inequalities:
\begin{equation}\label{eq_AFC_IZ}
    -\mathcal{L}_{AFC} \le I_{NCE} \le I(\textbf{Z}^{(1)}, \textbf{Z}^{(2)}).
\end{equation}
As demonstrated in literature \cite{2020GRACE-Zhu}, we have the following inequality with respect to mutual information between the original attributes $\textbf{X}$ and the two latent representation views $\textbf{Z}^{(1)}$ and $\textbf{Z}^{(2)}$,
\begin{equation}\label{eq_IZ_IXZ}
    I(\textbf{Z}^{(1)}, \textbf{Z}^{(2)}) \le I(\textbf{X}; \textbf{Z}^{(1)}, \textbf{Z}^{(2)}).
\end{equation}
At last, we arrive at the required inequalities
\begin{equation}\label{eq_AFC_INCE_IXZ}
    -\mathcal{L}_{AFC} \le I_{NCE} \le I(\textbf{X}; \textbf{Z}^{(1)}, \textbf{Z}^{(2)}).
\end{equation}
Evidenced by literature \cite{2020MuImax-Tschannen}, maximizing looser lower bounds on mutual information can lead to better representations. Inequalities \textcolor{blue}{(\ref{eq_AFC_INCE_IXZ})} prove that $-\mathcal{L}_{AFC}$ serves as a lower bound of $I_{NCE}$, which is a stricter estimator of the mutual information between the original attributes and the two latent node representations. Consequently, minimizing $\mathcal{L}_{AFC}$ involves the maximization of $I_{NCE}$, simultaneously can be seen as a proxy for maximizing $I(\textbf{X}; \textbf{Z}^{(1)}, \textbf{Z}^{(2)})$, and finally results in superior representations. 
Thus, the proof is completed.
\end{proof}
\begin{table*}[!ht]
    \renewcommand{\arraystretch}{1.15}
    \caption{Comparison of average clustering performance under 10 runs on eight graph datasets (as well as their homophily ratios $r_h$) obtained by our approach to sixteen state-of-the-art methods. Four metrics (in \%) with mean and std are used to evaluate the clustering results. $\mathbf{Average}$ represents the mean score of each method across eight datasets. The best and second best performance on each dataset are marked with boldface and underline, respectively.}
    \begin{center}
    \setlength{\tabcolsep}{0.7mm}{
    \begin{tabular}{c|cccc|cccc|cccc}
    \hline \hline 
    \multirow{2}*{\textbf{Method}} & \multicolumn{4}{c|}{\textbf{Cora~(0.81)}} & \multicolumn{4}{c|}{\textbf{Citeseer~ (0.74)}} & \multicolumn{4}{c}{\textbf{ACM~(0.82)}} \\
    % \cmidrule(lr){2-13}
     &\textbf{ACC} &\textbf{NMI} &\textbf{ARI} &\textbf{F1} &\textbf{ACC} &\textbf{NMI} &\textbf{ARI} &\textbf{F1} &\textbf{ACC} &\textbf{NMI} &\textbf{ARI} &\textbf{F1} \\
    \hline
    \textbf{SDCN}~(\textit{WWW}'20)\cite{2020SDCN-bo}
    &56.0$\pm$2.6 &34.6$\pm$3.0 &29.2$\pm$2.3 &48.6$\pm$4.1 
    &65.9$\pm$0.3 &38.7$\pm$0.3 &40.1$\pm$0.4 &63.6$\pm$0.2 
    &90.0$\pm$0.5 &66.8$\pm$1.5 &72.7$\pm$1.3 &90.0$\pm$0.5 \\
    \textbf{DFCN}~(\textit{AAAI}'21)\cite{2021DFCN-tu-deep} 
    &72.6$\pm$0.6 &53.6$\pm$1.0 &49.3$\pm$1.1 &68.0$\pm$0.7 
    &69.9$\pm$0.2 &44.1$\pm$0.3 &46.0$\pm$0.3 &64.7$\pm$0.2 
    &90.9$\pm$0.1 &69.7$\pm$0.3 &75.1$\pm$0.3 &90.8$\pm$0.1 \\
    \textbf{AUTOSSL}~(\textit{ICLR}'22)\cite{2022AutoSSL-jin} 
    &63.5$\pm$0.7 &47.1$\pm$2.1 &39.4$\pm$1.6 &57.9$\pm$3.4 
    &68.4$\pm$0.2 &42.7$\pm$0.1 &43.9$\pm$0.2 &63.6$\pm$0.2 
    &90.3$\pm$0.4 &68.1$\pm$0.7 &73.5$\pm$1.0 &90.3$\pm$0.3 \\
    \textbf{GraphMAE}~(\textit{KDD}'22)\cite{2022GraphMAE-hou} 
    &74.0$\pm$0.5 &54.7$\pm$0.9 &50.2$\pm$1.1 &73.3$\pm$0.9
    &70.3$\pm$0.3 &44.0$\pm$0.5 &46.0$\pm$0.6 &\underline{65.9$\pm$0.3}
    &91.1$\pm$0.3 &69.3$\pm$0.6 &75.5$\pm$0.8 &91.1$\pm$0.3 \\
    \textbf{DyFSS}~(\textit{AAAI}'24)\cite{2024DyFSS-zhu} 
    &72.1$\pm$0.2 &53.6$\pm$0.3 &49.6$\pm$0.4 &70.5$\pm$0.2 
    &68.3$\pm$0.4 &41.6$\pm$0.2 &42.9$\pm$0.5 &64.0$\pm$0.5 
    &87.8$\pm$0.1 &61.0$\pm$0.3 &67.3$\pm$0.4 &87.9$\pm$0.1 \\
    \textbf{DGCluster}~(\textit{AAAI}'24)\cite{2024DGCluster-bhowmick} 
    &76.6$\pm$3.9 &\textbf{64.2$\pm$2.4} &\textbf{61.8$\pm$3.4} &67.9$\pm$7.3 
    &67.6$\pm$4.5 &45.0$\pm$3.0 &40.5$\pm$5.7 &64.6$\pm$5.0 
    &88.8$\pm$3.8 &68.9$\pm$8.0 &69.9$\pm$10.1 &88.9$\pm$3.7 \\
    \hline
    \textbf{DCRN}~(\textit{AAAI}'22)\cite{2022DCRN-liu} 
    &73.8$\pm$0.1 &56.3$\pm$0.2 &53.6$\pm$0.2 &64.9$\pm$0.2 
    &\underline{71.1$\pm$0.1} &\underline{46.0$\pm$0.1}
    &\underline{47.8$\pm$0.1} &\textbf{66.2$\pm$0.1} 
    &92.0$\pm$0.1 &72.8$\pm$0.3 &78.6$\pm$0.2 &\underline{92.3$\pm$0.1} \\
    \textbf{AGC-DRR}~(\textit{IJCAI}'22)\cite{2022AGC-DRR-Gong} 
     &62.4$\pm$0.6 &47.8$\pm$1.4 &45.0$\pm$1.9 &40.7$\pm$0.2 
     &68.1$\pm$1.6 &43.5$\pm$1.3 &45.3$\pm$1.9 &64.7$\pm$1.0 
     &\underline{92.4$\pm$0.1} &\underline{72.1$\pm$0.3} &\underline{78.8$\pm$0.6} &91.8$\pm$0.1 \\
     \textbf{CONVERT}~(\textit{MM}'23)\cite{2023CONVERT-yang} 
     &73.9$\pm$1.0 &55.6$\pm$1.0 &50.2$\pm$1.5 &72.1$\pm$2.8 
     &68.2$\pm$0.7 &41.1$\pm$0.9 &42.6$\pm$1.3 &62.4$\pm$2.0 
     &86.2$\pm$0.8 &56.6$\pm$1.6 &63.4$\pm$1.9 &86.3$\pm$0.8 \\
    \textbf{SCGC}~(\textit{TNNLS}'23)\cite{2023SCGC-Liu} 
     &73.2$\pm$1.5 &55.6$\pm$0.9 &51.5$\pm$1.9 &70.3$\pm$2.1 
     &70.9$\pm$0.7 &45.0$\pm$0.5 &46.1$\pm$1.0 &62.1$\pm$0.9 
     &89.8$\pm$0.5 &66.7$\pm$1.1 &72.4$\pm$1.3 &89.7$\pm$0.5 \\
     \textbf{NCLA}~(\textit{AAAI}'23)\cite{2023NCLA-shen-neighbor} 
     &73.3$\pm$1.8 &57.9$\pm$1.2 &53.6$\pm$2.5 &71.8$\pm$1.3 
     &67.0$\pm$1.8 &42.4$\pm$0.8 &42.7$\pm$2.3 &63.0$\pm$0.7 
     &91.7$\pm$0.2 &71.5$\pm$0.6 &77.1$\pm$0.6 &91.7$\pm$0.2 \\
    \textbf{SCAGC}~(\textit{TMM}'23)\cite{2023SCAGC-Xia-TMM} 
     &64.7$\pm$0.2 &49.3$\pm$0.1 &40.5$\pm$0.3 &63.2$\pm$0.2 
     &63.4$\pm$0.1 &39.9$\pm$0.1 &39.4$\pm$0.1 &61.1$\pm$0.1 
     &91.8$\pm$0.1 &71.1$\pm$0.1 &77.3$\pm$0.1 &91.9$\pm$0.1 \\
    \textbf{CCGC}~(\textit{AAAI}'23)\cite{2023CCGC-Yang} 
     &74.4$\pm$1.2 &57.0$\pm$1.1 &52.4$\pm$1.7 &70.8$\pm$3.5 
     &69.6$\pm$1.4 &43.7$\pm$1.8 &44.6$\pm$2.4 &62.2$\pm$2.4 
     &90.1$\pm$0.4 &67.2$\pm$1.3 &73.1$\pm$1.2 &90.1$\pm$0.4 \\
    \textbf{HSAN}~(\textit{AAAI}'23)\cite{2023HSAN-liu} 
     &\underline{77.0$\pm$0.6} &\underline{59.1$\pm$0.8} &\underline{57.0$\pm$1.2} &\underline{75.3$\pm$1.2} &71.0$\pm$0.7 &45.2$\pm$0.9 &47.3$\pm$1.4 &63.2$\pm$2.1 
     &89.7$\pm$0.2 &66.8$\pm$0.6 &72.3$\pm$0.6 &89.7$\pm$0.2 \\
    \textbf{GraphACL}~(\textit{NIPS}'23)\cite{2023GraphACL-Xiao} 
     &72.9$\pm$0.8 &56.3$\pm$0.6 &50.3$\pm$1.3 &69.8$\pm$1.8 
     &69.9$\pm$0.5 &44.3$\pm$0.6 &46.2$\pm$1.0 &64.5$\pm$1.2 
     &88.3$\pm$0.5 &63.2$\pm$1.2 &68.7$\pm$1.2 &88.3$\pm$0.4 \\ 
     \textbf{HeterGCL}~(\textit{IJCAI}'24)\cite{2024HeterGCL-Wang} 
     & 65.1$\pm$3.8 & 48.6$\pm$2.6 & 42.6$\pm$4.9 & 57.7$\pm$4.0 
     & 68.4$\pm$1.4 & 43.2$\pm$1.4 & 42.7$\pm$2.3 & 59.3$\pm$1.2 
     & 87.6$\pm$2.0 & 63.4$\pm$3.1 & 66.9$\pm$4.8 & 87.6$\pm$2.0 \\
     \hline
    \textbf{NeuCGC}
     &\textbf{77.1$\pm$1.0} &59.0$\pm$1.1 &56.3$\pm$1.4 &\textbf{75.8$\pm$1.2} 
     &\textbf{72.5$\pm$0.5} &\textbf{46.7$\pm$0.7} &\textbf{48.1$\pm$0.9} &64.0$\pm$0.9 
     &\textbf{92.6$\pm$0.2} &\textbf{73.8$\pm$0.5} &\textbf{79.3$\pm$0.5} &\textbf{92.6$\pm$0.2} \\
    \midrule[1.0pt]
    \multirow{2}*{\textbf{Method}} 
    &\multicolumn{4}{c|}{\textbf{DBLP~(0.80)}} 
    &\multicolumn{4}{c|}{\textbf{Wisconsin~(0.18)}} 
    &\multicolumn{4}{c}{\textbf{Cornell~(0.30)}} \\
    % \cmidrule(lr){2-13}
    &\textbf{ACC} &\textbf{NMI} &\textbf{ARI} &\textbf{F1} &\textbf{ACC} &\textbf{NMI} &\textbf{ARI} &\textbf{F1} &\textbf{ACC} &\textbf{NMI} &\textbf{ARI} &\textbf{F1} \\
    \hline
    \textbf{SDCN}~(\textit{WWW}'20)\cite{2020SDCN-bo}
    &68.0$\pm$1.8 &39.5$\pm$1.3 &39.1$\pm$2.0 &67.7$\pm$1.5 
    &42.6$\pm$0.5 &11.9$\pm$0.1 &12.7$\pm$0.5 &30.7$\pm$0.1 
    &52.3$\pm$2.1 &16.2$\pm$0.9 &12.3$\pm$1.0 &31.9$\pm$2.3 \\
    \textbf{DFCN}~(\textit{AAAI}'21)\cite{2021DFCN-tu-deep} 
    &76.5$\pm$0.4 &44.5$\pm$0.6 &47.6$\pm$1.1 &76.2$\pm$0.4 
    &51.7$\pm$1.7 &15.0$\pm$2.0 &17.6$\pm$2.7 &39.3$\pm$1.7 
    &56.3$\pm$1.4 &10.4$\pm$2.7 &14.2$\pm$3.6 &32.7$\pm$1.6 \\
    \textbf{AUTOSSL}~(\textit{ICLR}'22)\cite{2022AutoSSL-jin} 
    &48.7$\pm$3.1 &25.1$\pm$2.0 &13.5$\pm$2.7 &49.6$\pm$3.2 
    &49.8$\pm$2.2 &17.7$\pm$3.1 &13.9$\pm$3.3 &32.2$\pm$1.9 
    &46.7$\pm$1.6 &13.4$\pm$5.7 &11.0$\pm$2.5 &29.3$\pm$2.3 \\
    \textbf{GraphMAE}~(\textit{KDD}'22)\cite{2022GraphMAE-hou} 
    &\underline{80.4$\pm$0.5} &\underline{50.2$\pm$1.0} &\underline{55.4$\pm$1.0} &\underline{79.9$\pm$0.5} 
    &54.6$\pm$2.3 &19.0$\pm$4.0 &21.2$\pm$3.5 &40.6$\pm$3.5 
    &55.1$\pm$0.4 &4.3$\pm$0.8 &0.3$\pm$0.9 &17.4$\pm$0.4 \\
    \textbf{DyFSS}~(\textit{AAAI}'24)\cite{2024DyFSS-zhu} 
    &60.6$\pm$2.8 &29.7$\pm$2.1 &23.2$\pm$2.3 &61.0$\pm$2.6 
    &48.6$\pm$1.1 &11.3$\pm$1.1 &10.8$\pm$1.2 &32.2$\pm$1.3 
    &46.3$\pm$1.1 &8.0$\pm$0.4 &5.5$\pm$0.7 &24.6$\pm$1.0 \\
    \textbf{DGCluster}~(\textit{AAAI}'24)\cite{2024DGCluster-bhowmick} 
    &74.1$\pm$5.8 &45.9$\pm$4.0 &48.1$\pm$6.3 &70.7$\pm$9.1 
    &42.5$\pm$3.5 &12.3$\pm$2.5 &8.3$\pm$2.9 &32.2$\pm$3.3 
    &44.9$\pm$3.7 &11.0$\pm$4.2 &13.1$\pm$6.6 &31.0$\pm$2.1 \\
    
    \hline
    \textbf{DCRN}~(\textit{AAAI}'22)\cite{2022DCRN-liu} 
    &79.7$\pm$0.1 &49.2$\pm$0.5 &53.7$\pm$0.2 &79.3$\pm$0.2 
    &47.3$\pm$0.2 &15.6$\pm$0.3 &13.2$\pm$0.3 &40.5$\pm$0.0 
    &51.8$\pm$1.0 &8.1$\pm$1.7 &13.5$\pm$1.5 &29.5$\pm$0.9 \\
    \textbf{AGC-DRR}~(\textit{IJCAI}'22)\cite{2022AGC-DRR-Gong} 
    &80.3$\pm$0.3 &49.9$\pm$0.7 &55.2$\pm$0.7 &79.8$\pm$0.6 
    &43.2$\pm$3.9 &10.7$\pm$2.5 &8.8$\pm$3.1 &31.0$\pm$2.9 
    &44.3$\pm$3.5 &8.3$\pm$2.7 &6.9$\pm$3.4 &29.4$\pm$3.0 \\
    \textbf{CONVERT}~(\textit{MM}'23)\cite{2023CONVERT-yang} 
    &55.4$\pm$2.9 &22.7$\pm$2.7 &20.3$\pm$2.6 &55.4$\pm$3.3 
    &59.9$\pm$2.0 &\underline{35.5$\pm$2.5} &33.0$\pm$2.7 &\underline{47.9$\pm$3.3} 
    &49.7$\pm$4.2 &22.4$\pm$5.2 &18.0$\pm$5.2 &38.4$\pm$4.6 \\
    \textbf{SCGC}~(\textit{TNNLS}'23)\cite{2023SCGC-Liu} 
    &67.1$\pm$1.9 &38.6$\pm$1.7 &35.7$\pm$2.0 &66.9$\pm$2.0 
    &45.9$\pm$4.6 &14.7$\pm$1.9 &11.1$\pm$3.4 &35.3$\pm$2.7 
    &52.3$\pm$1.8 &9.2$\pm$2.0 &14.3$\pm$2.5 &32.5$\pm$1.0 \\
    \textbf{NCLA}~(\textit{AAAI}'23)\cite{2023NCLA-shen-neighbor} 
    &63.9$\pm$0.7 &35.5$\pm$0.7 &26.3$\pm$1.2 &65.0$\pm$0.7 
    &38.2$\pm$3.3 &9.2$\pm$0.7 &5.6$\pm$2.1 &30.7$\pm$2.8 
    &42.0$\pm$0.5 &7.4$\pm$0.5 &5.7$\pm$0.8 &30.1$\pm$0.3 \\
    \textbf{SCAGC}~(\textit{TMM}'23)\cite{2023SCAGC-Xia-TMM} 
    &78.8$\pm$0.1 &48.7$\pm$0.2 &53.4$\pm$0.3 &78.4$\pm$0.1 
    &40.2$\pm$0.8 &12.3$\pm$1.5 &7.2$\pm$1.1 &30.2$\pm$4.4 
    &32.3$\pm$0.3 &9.6$\pm$0.1 &2.7$\pm$0.1 &28.2$\pm$0.4 \\
    \textbf{CCGC}~(\textit{AAAI}'23)\cite{2023CCGC-Yang} 
    &55.6$\pm$1.7 &26.0$\pm$1.8 &19.8$\pm$1.1 &56.1$\pm$1.5 
    &47.9$\pm$3.5 &12.0$\pm$5.4 &9.0$\pm$6.1 &30.4$\pm$9.1 
    &54.9$\pm$1.7 &9.4$\pm$1.7 &15.4$\pm$2.1 &30.8$\pm$2.9 \\
    \textbf{HSAN}~(\textit{AAAI}'23)\cite{2023HSAN-liu} 
    &79.6$\pm$0.3 &49.3$\pm$0.5 &54.2$\pm$0.6 &79.2$\pm$0.3 
    &49.8$\pm$2.9 &14.4$\pm$2.4 &11.7$\pm$2.8 &34.0$\pm$2.5 
    &56.6$\pm$2.1 &11.5$\pm$2.3 &17.7$\pm$3.2 &32.1$\pm$3.6 \\
    \textbf{GraphACL}~(\textit{NIPS}'23)\cite{2023GraphACL-Xiao} 
    &76.0$\pm$0.9 &43.7$\pm$1.2 &47.4$\pm$1.5 &75.5$\pm$1.0 
    &60.7$\pm$1.7 &25.0$\pm$2.7 &27.0$\pm$3.7 &41.6$\pm$2.8 
    &56.7$\pm$1.9 &7.9$\pm$4.0 &7.1$\pm$7.6 &24.1$\pm$7.5 \\
    \textbf{HeterGCL}~(\textit{IJCAI}'24)\cite{2024HeterGCL-Wang} 
    &56.7$\pm$0.3 &40.4$\pm$0.7 &33.2$\pm$0.4 &47.5$\pm$0.8 
    &\underline{65.5$\pm$3.8} &34.0$\pm$6.6 &\underline{34.3$\pm$6.1} &43.3$\pm$3.8 
    &\underline{64.5$\pm$1.3} &\underline{30.7$\pm$3.7} &\underline{30.8$\pm$2.9} &\underline{41.2$\pm$3.2} \\
    \hline
    \textbf{NeuCGC} 
    &\textbf{81.5$\pm$0.2} &\textbf{53.0$\pm$0.3} &\textbf{58.5$\pm$0.4} &\textbf{80.8$\pm$0.2} 
    &\textbf{73.6$\pm$1.5} &\textbf{46.6$\pm$3.9} &\textbf{51.6$\pm$3.2} &\textbf{58.0$\pm$3.2} 
    &\textbf{70.2$\pm$1.5} &\textbf{37.0$\pm$3.0} &\textbf{44.5$\pm$6.0} &\textbf{49.8$\pm$3.1} \\
    \midrule[1.0pt]
    \multirow{2}*{\textbf{Method}} &\multicolumn{4}{c|}{\textbf{Texas~ (0.06)}} &\multicolumn{4}{c|}{\textbf{Chameleon~ (0.23)}} & \multicolumn{4}{c}{\textbf{Average}} \\
    % &\multicolumn{4}{c}{\multirow{2}*{\textbf{Average}}} \\
    % \cmidrule(lr){2-9}
     &\textbf{ACC} &\textbf{NMI} &\textbf{ARI} &\textbf{F1} &\textbf{ACC} &\textbf{NMI} &\textbf{ARI} &\textbf{F1} &\textbf{ACC} &\textbf{NMI} &\textbf{ARI} &\textbf{F1}  \\
    \hline
    \textbf{SDCN}~(\textit{WWW}'20)\cite{2020SDCN-bo}
    &58.2$\pm$2.8 &10.4$\pm$4.1 &9.4$\pm$6.7 &24.2$\pm$4.6 
    &34.8$\pm$0.7 &11.5$\pm$1.4 &9.7$\pm$1.3 &27.7$\pm$1.3
    &58.5 &28.7 &28.1 &47.6  \\
    \textbf{DFCN}~(\textit{AAAI}'21)\cite{2021DFCN-tu-deep} 
    &52.4$\pm$1.5 &10.2$\pm$2.5 &18.8$\pm$3.3 &29.0$\pm$1.7
    &35.4$\pm$0.2 &13.7$\pm$0.5 &9.4$\pm$0.3 &30.0$\pm$0.2
    &63.2 &32.7 &34.8 &53.9  \\
    \textbf{AUTOSSL}~(\textit{ICLR}'22)\cite{2022AutoSSL-jin} 
    &65.4$\pm$0.7 &24.9$\pm$1.2 &32.0$\pm$1.5 &39.7$\pm$2.2
    &32.7$\pm$0.2 &9.5$\pm$0.7 &7.5$\pm$0.9 &25.5$\pm$1.2
    &58.2 &31.1 &29.4 &48.6  \\
    \textbf{GraphMAE}~(\textit{KDD}'22)\cite{2022GraphMAE-hou} 
    &68.3$\pm$1.0 &36.3$\pm$2.9 &\underline{44.1$\pm$3.9} &43.5$\pm$3.0 
    &35.8$\pm$0.8 &11.2$\pm$1.8 &7.8$\pm$0.9 &34.9$\pm$0.8
    &66.2 &36.2 &\underline{37.6} &\underline{55.9}  \\
    \textbf{DyFSS}~(\textit{AAAI}'24)\cite{2024DyFSS-zhu} 
    &60.4$\pm$0.5 &15.2$\pm$0.6 &28.3$\pm$0.9 &32.7$\pm$0.6 
    &34.1$\pm$0.2 &11.0$\pm$0.3 &7.7$\pm$0.2 &29.6$\pm$1.2
    &59.8 &28.9 &29.5 &50.4  \\
    \textbf{DGCluster}~(\textit{AAAI}'24)\cite{2024DGCluster-bhowmick} 
    &49.2$\pm$2.8 &9.2$\pm$2.0 &17.6$\pm$4.2 &29.4$\pm$1.7 
    &38.6$\pm$1.2 &18.2$\pm$1.3 &\textbf{15.8$\pm$2.2} &36.2$\pm$1.8
    &60.3 &34.4 &34.4 &52.7  \\
    
    \hline
    \textbf{DCRN}~(\textit{AAAI}'22)\cite{2022DCRN-liu} 
    &51.6$\pm$1.3 &10.3$\pm$1.6 &17.9$\pm$2.9 &29.6$\pm$1.0
    &33.3$\pm$0.2 &9.6$\pm$0.3 &5.5$\pm$0.1 &27.1$\pm$0.1
    &62.6 &33.5 &35.5 &53.7  \\
    \textbf{AGC-DRR}~(\textit{IJCAI}'22)\cite{2022AGC-DRR-Gong} 
    &42.5$\pm$4.0 &5.7$\pm$1.2 &6.6$\pm$3.3 &26.9$\pm$2.0 
    &35.6$\pm$0.3 &14.5$\pm$2.6 &12.5$\pm$2.5 &29.7$\pm$0.6
    &58.6 &31.6 &32.4 &49.3  \\
    \textbf{CONVERT}~(\textit{MM}'23)\cite{2023CONVERT-yang} 
    &49.6$\pm$4.0 &22.0$\pm$5.3 &18.1$\pm$5.1 &37.9$\pm$4.8 
    &35.8$\pm$0.7 &15.9$\pm$1.7 &10.7$\pm$2.4 &33.6$\pm$2.4
    &59.9 &34.0 &32.1 &54.3  \\
    \textbf{SCGC}~(\textit{TNNLS}'23)\cite{2023SCGC-Liu} 
    &52.5$\pm$3.0 &11.5$\pm$1.6 &15.5$\pm$4.4 &28.9$\pm$2.9 
    &35.0$\pm$0.9 &15.3$\pm$2.1 &9.7$\pm$1.9 &31.7$\pm$1.6
    &60.9 &32.1 &32.1 &52.2  \\
    \textbf{NCLA}~(\textit{AAAI}'23)\cite{2023NCLA-shen-neighbor} 
    &46.9$\pm$1.6 &8.0$\pm$0.7 &15.8$\pm$1.7 &29.7$\pm$0.8 
    &34.6$\pm$0.7 &11.1$\pm$1.2 &8.0$\pm$0.8 &34.2$\pm$0.7
    &57.2 &30.4 &29.4 &52.1  \\
    \textbf{SCAGC}~(\textit{TMM}'23)\cite{2023SCAGC-Xia-TMM} 
    &40.0$\pm$0.2 &7.2$\pm$0.3 &6.7$\pm$0.3 &28.3$\pm$0.3 
    &\underline{40.3$\pm$0.3} &11.0$\pm$0.2 &9.3$\pm$0.3 &\underline{40.0$\pm$0.3}
    &56.5 &31.2 &29.6 &52.7  \\
    \textbf{CCGC}~(\textit{AAAI}'23)\cite{2023CCGC-Yang} 
    &55.4$\pm$2.5 &13.8$\pm$2.3 &19.8$\pm$4.0 &29.7$\pm$2.9 
    &36.0$\pm$0.5 &16.2$\pm$1.6 &11.0$\pm$2.3 &33.4$\pm$1.8
    &60.5 &30.7 &30.7 &50.5  \\
    \textbf{HSAN}~(\textit{AAAI}'23)\cite{2023HSAN-liu} 
    &58.0$\pm$2.4 &15.4$\pm$3.6 &21.6$\pm$4.7 &33.7$\pm$4.2 
    &34.8$\pm$0.7 &11.0$\pm$2.8 &7.1$\pm$0.7 &32.4$\pm$1.8
    &64.6 &34.1 &36.2 &55.0  \\
    \textbf{GraphACL}~(\textit{NIPS}'23)\cite{2023GraphACL-Xiao} 
    &\underline{70.3$\pm$1.2} &\underline{38.2$\pm$1.6} &41.6$\pm$4.6 &\underline{44.0$\pm$4.8} 
    &35.8$\pm$0.5 &9.5$\pm$2.1 &6.9$\pm$0.6 &33.4$\pm$0.9
    &\underline{66.4} &36.0 &36.9 &55.2  \\
    \textbf{HeterGCL}~(\textit{IJCAI}'24)\cite{2024HeterGCL-Wang} 
    &64.0$\pm$0.9 &38.5$\pm$1.6 &30.3$\pm$2.6 &33.5$\pm$1.3 
    &40.0$\pm$0.7 &\underline{21.3$\pm$0.3} &14.0$\pm$0.4 &34.5$\pm$0.7 
    &64.0 &\underline{40.1} & 36.9 & 50.6  \\
    \hline
    \textbf{NeuCGC} 
    &\textbf{73.1$\pm$1.7} &\textbf{40.5$\pm$3.2} &\textbf{51.9$\pm$4.5} &\textbf{49.3$\pm$3.8} 
    &\textbf{42.3$\pm$0.4} &\textbf{22.9$\pm$0.1} &\underline{14.8$\pm$0.6} &\textbf{40.4$\pm$2.3}
    &\textbf{72.8} &\textbf{47.4} &\textbf{50.6} &\textbf{63.8}  \\
    \hline \hline
    \end{tabular}
    }
    \label{tb_Comparison}
\end{center}
\end{table*}
Theorem \ref{thm:MI} not only ensures good convergence during model training but also indicates that our method is capable to achieve greater learning performance than the InfoNCE objective by incorporating neutral pairs, thereby yielding better clustering results. In Section \ref{sec_Ana_NeuCL}, we further empirically evaluate the superiority of $\mathcal{L}_{AFC}$ against the InfoNCE objective.
% Theorem \ref{thm:MI} indicates that optimizing $\mathcal{L}_{AFC}$ is equivalent to maximizing mutual information $I(\textbf{X}; \textbf{Z}^{(1)}, \textbf{Z}^{(2)})$. This ensures convergence during model training. Additionally, we observe that $-\mathcal{L}_{AFC} \le I_{NCE}$, which indicates $-\mathcal{L}_{AFC}$ is a lower bound of $I_{NCE}$. This suggests that optimizing $\mathcal{L}_{AFC}$ includes the maximization of $I_{NCE}$, thereby boosting our model to yield a better solution than the InfoNCE objective. Furthermore, in Section \ref{sec_Ana_NeuCL}, we empirically evaluate the superiority of $\mathcal{L}_{AFC}$ against the InfoNCE objective.

The aforementioned analysis shows that NeuCGC can achieve global and local neighborhood distribution alignment between the learned graph representations, and simultaneously capture better feature consistency. These advantages enable our method to enjoy superior clustering performance.

\section{Experiments}
%Extensive experiments on seven real-world benchmark datasets are conducted to demonstrate the graph clustering performance of our proposed method as compared to the latest and most excellent baselines.
\subsection{Experimental Setup}
% \paragraph{Datasets.}
\noindent \textbf{Datasets.}
We evaluate our method using eleven benchmark graph datasets, including six homophilic graph datasets: citation networks Cora, Citeseer, and Pubmed \cite{2008Cora-esen-Collective}, paper network ACM \cite{2022survey-liu}, author network DBLP \cite{2022survey-liu}, and Amazon co-purchase graph Photo \cite{2015Amazon-mcauley}, along with five heterophilic graph datasets: webpage networks Texas, Wisconsin and Cornell \cite{2009Texas-tang}, as well as two Wikipedia webpage networks Chameleon and Crocodile \cite{2021Chameleon-Benedek}. 
%Among them, the homophilic graph datasets have homophily ratios larger than 0.5, while the heterophilic ones have that less than 0.5. 
Among them, Photo is a moderate scale dataset, Pubmed and Crocodile are large scale datasets, and the others are small datasets.
The range of homophily ratios across them offers a comprehensive foundation for evaluating the effectiveness of our proposed NeuCGC. Detailed statistics for these datasets are provided in Table \ref{tb_Statistics}.

\noindent \textbf{Competing methods.} 
We compare the clustering performance of our approach with sixteen state-of-the-art graph clustering methods, including classical methods: 
SDCN \cite{2020SDCN-bo}, DFCN \cite{2021DFCN-tu-deep}, AutoSSL \cite{2022AutoSSL-jin}, GraphMAE \cite{2022GraphMAE-hou}, DyFSS \cite{2024DyFSS-zhu}, and DGCluster \cite{2024DGCluster-bhowmick}, 
and contrastive methods: 
DCRN \cite{2022DCRN-liu}, AGC-DRR \cite{2022AGC-DRR-Gong}, CONVERT \cite{2023CONVERT-yang}, SCGC \cite{2023SCGC-Liu}, NCLA \cite{2023NCLA-shen-neighbor}, SCAGC \cite{2023SCAGC-Xia-TMM}, CCGC \cite{2023CCGC-Yang}, HSAN \cite{2023HSAN-liu}, GraphACL \cite{2023GraphACL-Xiao}, and HeterGCL \cite{2024HeterGCL-Wang}.

\noindent \textbf{Metrics.}
We evaluate the quality of cluster assignment using four widely used metrics \cite{2022survey-liu, 2020SDCN-bo, 2023SGC-Liu, 2024LaSA-Liu, 2024DyFSS-zhu}, namely, Clustering Accuracy (ACC), Normalized Mutual Information (NMI), Adjusted Rand Index (ARI), and Macro-F1 Score (F1). A higher value indicates better clustering performance. 

\noindent \textbf{Implementation Details.}
All experiments are conducted on the PyTorch platform and NVIDIA GPUs, with Pubmed and Crocodile processed on an RTX 4090 (24GB), following existing methods \cite{2022DCRN-liu},\cite{2023SCGC-Liu},\cite{2024DyFSS-zhu},\cite{2024HeterGCL-Wang}, and others run on an RTX 2080 Super (8GB) GPU. 
% The node labels are only used for evaluating clustering results. 
% For DGC methods, namely SDCN, DFCN, DCRN, AutoSSL, DyFSS, CONVERT, HSAN, and DGCluster, we use the official implementations provided by the authors without any modification. 
% And for GraphMAE, NCLA, and GraphACL, we perform K-means algorithm on the obtained embeddings and set the clustering numbers to the numbers of ground-truth classes. 
For baselines, we use the official implementations provided by the authors. A grid search is performed to select the optimal hyper-parameters for every method, and all experiments are repeated 10 times with different random seeds. Our model is trained for 500 epochs until convergence with the Adam optimizer \cite{2017Adam-Kingma}. The dimension $d$ is 300 for DBLP, while empirically set to 1000 for other datasets; the learning rate is selected from \{1e-03, 1e-04, 1e-05\}; the trade-off factors $\lambda_1$ and $\lambda_2$ are selected from \{0.01, 0.1, 0.5, 1, 5, 10, 100\}; the confidence signal $k$ is tuned from 0.1 to 1.0 with interval 0.1.

\subsection{Comparison Results}
We first report the comparison results in Table \ref{tb_Comparison} and Tabel \ref{tb_Large_scale}.
Building on these results, we summarize our key observations below:
\begin{itemize}
    \item Compared to classical DGC algorithms such as SDCN and DFCN, our method and other contrastive graph clustering methods generally achieve better performance. Even when compared to the latest DyFSS and DGCluster, our approach exhibits a significant advantage. This is because the contrastive mechanism encourages the network to capture more self-supervised information, enabling it to learn more discriminative representations and achieve higher clustering accuracy.
    \item NeuCGC significantly outperforms traditional CGC methods such as DCRN, AGC-DRR, and CONVERT, as well as neighbor-based CGC approaches like SCGC and NCLA, particularly on graphs with low homophily ratios. For instance, on the Wisconsin and Texas datasets, it achieves an over 20\% improvement in ACC compared to SCGC. This demonstrates the effectiveness of the proposed neutral contrastive learning mechanism in addressing low-homophily graphs. Moreover, it highlights the effectiveness of our method when faced with graphs with different homophily ratios.
    \item Further, even on larger-scale datasets (Table \ref{tb_Large_scale}), NeuCGC demonstrates competitive clustering performance, underscoring its robustness and scalability. On widely used homophilic graph datasets, e.g. Photo and Pubmed, it achieves leading performance across multiple evaluation metrics. Similarly, on the heterophilic graph dataset Crocodile, it outperforms nearly all competing baselines. These results, spanning datasets of varying sizes and homophily levels, provide comprehensive evidence of the superiority and versatility of our method.
    \item Overall, our approach achieves superior or comparable performance compared to the latest DGC baseline methods, e.g., DyFSS, DGCluster, HSAN, GraphACL, and HeterGCL, regardless of the level of graph homophily. Particularly on the Wisconsin and Cornell datasets, it outperforms the runner-up by at least 5.7\% across all metrics. When considering the average scores, it maintains a significant lead, with margins going up to 6.4\%, 7.3\%, 13\% and 7.9\% in the ACC, NMI, ARI and F1 metrics, respectively. These results demonstrate that NeuCGC, leveraging global and local neighborhood distribution alignment alongside feature consistency neutral contrastive learning, enables efficient representation learning and achieves exceptional clustering performance on graphs with varying homophily levels.
\end{itemize} 
\begin{table*}[!t]
    \renewcommand{\arraystretch}{1.15}
    \caption{Comparison of average clustering performance under 10 runs on three moderate and large scale graph datasets (as well as their homophily ratios $r_h$). Four metrics (in \%) are used to evaluate the clustering results. ``OOM'' indicates running out-of-memory on 24 GB GPU memory. The best and second best results are marked with boldface and underline, respectively.}
    \begin{center}
    \setlength{\tabcolsep}{0.7mm}{
    \begin{tabular}{c|cccc|cccc|cccc}
    \hline \hline 
    \multirow{2}*{\textbf{Method}} & \multicolumn{4}{c|}{\textbf{Photo~(0.83)}} & \multicolumn{4}{c|}{\textbf{Pubmed~ (0.80)}} & \multicolumn{4}{c}{\textbf{Crocodile~(0.24)}} \\
    % \cmidrule(lr){2-13}
     &\textbf{ACC} &\textbf{NMI} &\textbf{ARI} &\textbf{F1} &\textbf{ACC} &\textbf{NMI} &\textbf{ARI} &\textbf{F1} &\textbf{ACC} &\textbf{NMI} &\textbf{ARI} &\textbf{F1} \\
    \hline
    \textbf{SDCN}~(\textit{WWW}'20)\cite{2020SDCN-bo}
    &57.1$\pm$2.0 &45.4$\pm$2.5 &38.0$\pm$2.0 &41.5$\pm$4.0 &63.9$\pm$0.8 &25.1$\pm$2.1 &23.7$\pm$2.1 &64.8$\pm$0.9 &41.5$\pm$1.1 &14.1$\pm$2.0 &12.3$\pm$1.0 &36.8$\pm$1.9 \\
    \textbf{DFCN}~(\textit{AAAI}'21)\cite{2021DFCN-tu-deep} 
    &77.3$\pm$0.1 &68.8$\pm$0.4 &60.4$\pm$0.3 &72.0$\pm$0.1 &68.0$\pm$0.1 &32.5$\pm$0.2 &30.4$\pm$0.2 &67.4$\pm$0.1 &43.4$\pm$1.1 &16.7$\pm$0.8 &13.3$\pm$0.6 &40.5$\pm$2.0 \\
    \textbf{AUTOSSL}~(\textit{ICLR}'22)\cite{2022AutoSSL-jin} 
    &52.4$\pm$1.2 &44.6$\pm$2.0 &23.2$\pm$1.6 &47.7$\pm$1.7 &65.2$\pm$0.7 &29.4$\pm$1.4 &26.8$\pm$1.1 &65.1$\pm$0.5 &35.1$\pm$0.1 &5.3$\pm$0.5 &6.4$\pm$0.4 &24.2$\pm$0.5 \\
    \textbf{GraphMAE}~(\textit{KDD}'22)\cite{2022GraphMAE-hou} 
    &76.6$\pm$3.3 &64.9$\pm$3.7 &56.8$\pm$4.4 &73.1$\pm$4.1 &69.2$\pm$0.6 &31.6$\pm$1.0 &31.4$\pm$0.6 &68.9$\pm$0.8 &44.0$\pm$1.7 &17.5$\pm$0.8 &13.7$\pm$1.8 &41.4$\pm$1.8 \\
    \textbf{DyFSS}~(\textit{AAAI}'24)\cite{2024DyFSS-zhu} 
    &49.6$\pm$0.4 &46.7$\pm$0.6 &25.5$\pm$1.7 &46.8$\pm$1.0 &66.7$\pm$2.3 &27.7$\pm$4.4 &27.1$\pm$4.1 &66.9$\pm$1.7 &41.4$\pm$1.5 &15.0$\pm$1.5 &11.9$\pm$1.3 &33.6$\pm$2.3 \\
    \textbf{DGCluster}~(\textit{AAAI}'24)\cite{2024DGCluster-bhowmick} 
    &78.3$\pm$1.8 &68.3$\pm$2.6 &58.4$\pm$2.0 &\underline{76.1$\pm$1.5} &67.3$\pm$4.0 &28.3$\pm$3.5 &28.2$\pm$4.1 &65.9$\pm$6.3 &39.2$\pm$2.5 &11.5$\pm$5.7 &8.8$\pm$3.6 &31.5$\pm$4.4 \\
    \hline
    \textbf{DCRN}~(\textit{AAAI}'22)\cite{2022DCRN-liu} 
    &\underline{79.7$\pm$0.1} &\textbf{73.1$\pm$0.4} &\underline{63.2$\pm$0.3} &73.7$\pm$0.2 &69.5$\pm$0.2 &32.1$\pm$0.3 &31.6$\pm$0.3 &68.6$\pm$0.2 &43.1$\pm$0.5 &19.1$\pm$0.8 &14.2$\pm$0.2 &39.7$\pm$0.1 \\
    \textbf{AGC-DRR}~(\textit{IJCAI}'22)\cite{2022AGC-DRR-Gong} 
     &77.3$\pm$3.7 &\underline{71.7$\pm$1.7} &61.7$\pm$3.2 &71.8$\pm$3.7 &66.9$\pm$2.2 &27.3$\pm$3.0 &26.9$\pm$3.4 &67.3$\pm$1.8 &42.5$\pm$0.9 &14.3$\pm$0.5 &14.1$\pm$0.3 &39.6$\pm$0.9 \\
     \textbf{CONVERT}~(\textit{MM}'23)\cite{2023CONVERT-yang} 
     &77.1$\pm$0.5 &67.2$\pm$1.0 &60.7$\pm$1.8 &74.0$\pm$1.0 &68.0$\pm$1.8 &29.8$\pm$2.6 &29.6$\pm$2.7 &67.7$\pm$1.6 &44.2$\pm$1.0 &16.8$\pm$0.8 &\textbf{15.2$\pm$1.1} &42.4$\pm$1.1 \\
    \textbf{SCGC}~(\textit{TNNLS}'23)\cite{2023SCGC-Liu} 
     &77.4$\pm$0.3 &67.6$\pm$0.8 &58.4$\pm$0.7 &72.2$\pm$0.9 &63.7$\pm$1.3 &30.0$\pm$0.9 &27.5$\pm$0.7 &63.0$\pm$1.6 &43.4$\pm$0.7 &16.8$\pm$0.8 &13.6$\pm$0.9 &40.3$\pm$1.8 \\
     \textbf{NCLA}~(\textit{AAAI}'23)\cite{2023NCLA-shen-neighbor} 
     &78.3$\pm$0.5 &69.7$\pm$0.6 &60.9$\pm$0.6 &72.6$\pm$0.2 &67.5$\pm$0.2 &32.7$\pm$0.3 &29.6$\pm$0.4 &66.9$\pm$0.2 &38.8$\pm$0.7 &15.7$\pm$0.2 &12.4$\pm$0.5 &34.0$\pm$1.3  \\
    \textbf{SCAGC}~(\textit{TMM}'23)\cite{2023SCAGC-Xia-TMM} 
     &76.6$\pm$0.6 &65.6$\pm$0.6 &58.6$\pm$1.3 &71.1$\pm$0.8 &\textbf{71.9$\pm$0.2} &\underline{34.9$\pm$0.2} &\underline{33.5$\pm$0.3} &\textbf{71.0$\pm$0.2} &42.1$\pm$0.1 &19.9$\pm$0.3 &13.7$\pm$0.1 &40.0$\pm$0.1 \\
    \textbf{CCGC}~(\textit{AAAI}'23)\cite{2023CCGC-Yang} 
     &77.2$\pm$0.4 &67.4$\pm$0.4 &57.9$\pm$0.6 &72.1$\pm$0.5 &63.6$\pm$1.6 &31.6$\pm$0.8 &28.7$\pm$0.7 &63.0$\pm$2.0 &\underline{44.5$\pm$1.2} &18.0$\pm$2.2 &14.4$\pm$0.6 &\underline{42.5$\pm$2.3}  \\
    \textbf{HSAN}~(\textit{AAAI}'23)\cite{2023HSAN-liu} 
     &77.3$\pm$0.3 &67.1$\pm$0.3 &58.0$\pm$0.5 &71.8$\pm$0.2 
     &\multicolumn{4}{c|}{OOM} &\multicolumn{4}{c}{OOM}\\
    \textbf{GraphACL}~(\textit{NIPS}'23)\cite{2023GraphACL-Xiao} 
     &78.1$\pm$0.8 &66.8$\pm$0.9 &58.9$\pm$1.0 &75.1$\pm$3.1 &67.0$\pm$1.1 &30.1$\pm$2.3 &28.7$\pm$2.0 &66.9$\pm$1.0 &44.0$\pm$1.5 &15.9$\pm$1.4 &14.1$\pm$0.6 &41.3$\pm$2.9 \\ 
     \textbf{HeterGCL}~(\textit{IJCAI}'24)\cite{2024HeterGCL-Wang} 
     &70.1$\pm$1.2 &61.0$\pm$0.4 &50.7$\pm$1.0 &68.8$\pm$1.7 &61.5$\pm$1.5 &28.7$\pm$1.6 &25.4$\pm$1.6 &60.6$\pm$2.1 &40.5$\pm$0.9 &\underline{20.4$\pm$1.5} &7.4$\pm$0.5 &39.2$\pm$1.3 \\
     \hline
    \textbf{NeuCGC}
     &\textbf{81.4$\pm$0.3} &70.8$\pm$0.6 &\textbf{63.4$\pm$0.8} &\textbf{80.1$\pm$0.5} &\underline{70.7$\pm$0.1} &\textbf{36.1$\pm$0.2} &\textbf{34.3$\pm$0.2} &\underline{70.3$\pm$0.1} &\textbf{46.2$\pm$0.9} &\textbf{21.4$\pm$1.5} &\underline{14.5$\pm$1.1} &\textbf{45.4$\pm$1.2} \\
    \hline \hline
    \end{tabular}
    }
    \label{tb_Large_scale}
\end{center}
\vspace{-1.0em}
\end{table*}
\begin{table}[!t]
    \renewcommand\arraystretch{1.2}
    \caption{Ablation study of each component.}
    \centering{
    \setlength{\tabcolsep}{1.0mm}{
    \begin{tabular}{c c| cccc}
        \hline \hline 
        \textbf{Dataset} &\textbf{Metric} &\textbf{w/o \textbf{AFC}} &\textbf{w/o \textbf{GDA}} &\textbf{w/o \textbf{NCA}} &\textbf{NeuCGC} \\
        \hline
        \multirow{4}{*}{\textbf{Cora}}
        &ACC &73.6$\pm$0.5 &74.8$\pm$1.1 &75.4$\pm$0.8 &\textbf{77.1$\pm$1.0} \\
        &NMI &56.0$\pm$0.6 &56.6$\pm$1.1 &56.8$\pm$0.8 &\textbf{59.0$\pm$1.1} \\
        &ARI &50.9$\pm$0.8 &53.3$\pm$1.7 &54.1$\pm$1.0 &\textbf{56.3$\pm$1.4} \\
        &F1 &69.3$\pm$1.2 &72.3$\pm$2.2 &74.0$\pm$1.1 &\textbf{75.8$\pm$1.2} \\
        \hline
        \multirow{4}{*}{\textbf{DBLP}}
        &ACC &76.4$\pm$0.4 &70.3$\pm$1.9 &80.9$\pm$0.1	 &\textbf{81.5$\pm$0.2} \\
        &NMI &46.0$\pm$0.5 &39.7$\pm$1.4 &52.0$\pm$0.2	 &\textbf{53.0$\pm$0.3}  \\
        &ARI &48.4$\pm$0.7 &38.1$\pm$2.1 &57.1$\pm$0.2	 &\textbf{58.5$\pm$0.4}  \\
        &F1 &76.1$\pm$0.4 &70.4$\pm$1.9 &80.4$\pm$0.1	 &\textbf{80.8$\pm$0.2} \\
        \hline
        \multirow{4}{*}{\textbf{Wisconsin}}
        &ACC &71.8$\pm$1.3 &73.1$\pm$1.3 &73.6$\pm$1.5 &\textbf{73.6$\pm$1.5}  \\
        &NMI &45.6$\pm$3.1 &46.0$\pm$3.7 &46.2$\pm$4.1 &\textbf{46.6$\pm$3.9} \\
        &ARI &49.0$\pm$2.4 &51.4$\pm$2.4 &51.3$\pm$3.6 &\textbf{51.6$\pm$3.2}  \\
        &F1 &54.5$\pm$3.6 &56.0$\pm$2.8 &57.8$\pm$3.5 &\textbf{58.0$\pm$3.2}  \\
        \hline
        \multirow{4}{*}{\textbf{Cornell}}
        &ACC &70.0$\pm$1.5 &69.7$\pm$1.6 &70.1$\pm$1.9	 &\textbf{70.2$\pm$1.5}  \\
        &NMI &36.7$\pm$3.0 &35.7$\pm$3.0 &36.2$\pm$2.7	 &\textbf{37.0$\pm$3.0}  \\
        &ARI &44.3$\pm$5.1 &43.9$\pm$5.7 &43.3$\pm$3.2	 &\textbf{44.5$\pm$6.0}  \\
        &F1 &48.7$\pm$2.3 &48.2$\pm$3.9 &48.6$\pm$3.9	 &\textbf{49.8$\pm$3.1}  \\
        \hline \hline
    \end{tabular}
    }}
    \label{tb_Ablation}
    \vspace{-0.5em}
\end{table}
\begin{figure}[!t]
    \centering{
        \includegraphics[width=0.48\textwidth]{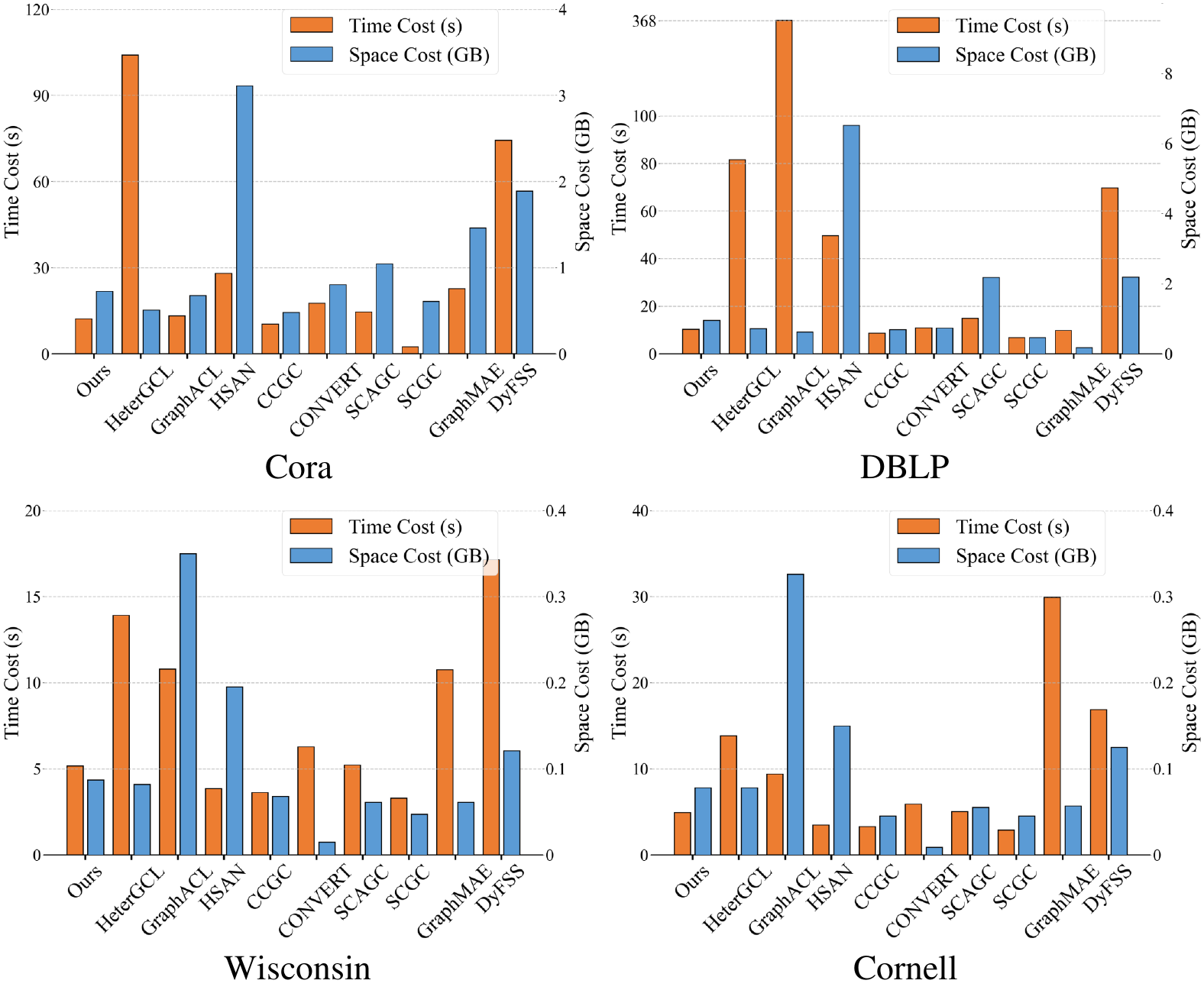}
        \captionsetup{justification=centering}
        \caption{Computational cost comparison of our NeuCGC against seven CGC methods and two conventional DGC methods on four datasets.}
    \label{fig_complexity}
    }
    \vspace{-1.5em}
\end{figure}
\begin{table}[!t]
    \renewcommand\arraystretch{1.2}
    \caption{Ablation studies of the neutral contrastive factor $\eta$ and the learned high-confidence graph $\textbf{H}$.}
    \centering{
    \setlength{\tabcolsep}{2.0mm}{
    \begin{tabular}{c c| ccc}
        \hline \hline 
        \textbf{Dataset} &\textbf{Metric} &\textbf{w/o $\eta$} &\textbf{InfoNCE} &\textbf{NeuCGC} \\
        \hline
        \multirow{4}{*}{\textbf{Cora}}
        &ACC &76.4$\pm$0.9 &75.0$\pm$0.9 &\textbf{77.1$\pm$1.0} \\
        &NMI &58.1$\pm$1.3 &55.8$\pm$0.8 &\textbf{59.0$\pm$1.1} \\
        &ARI &55.7$\pm$1.5 &53.1$\pm$1.6 &\textbf{56.3$\pm$1.4} \\
        &F1 &74.5$\pm$1.5 &73.5$\pm$1.0 &\textbf{75.8$\pm$1.2} \\
        \hline
        \multirow{4}{*}{\textbf{DBLP}}
        &ACC &80.8$\pm$0.2 &76.6$\pm$0.8 &\textbf{81.5$\pm$0.2} \\
        &NMI &51.8$\pm$0.3 &45.0$\pm$1.1 &\textbf{53.0$\pm$0.3}  \\
        &ARI &56.8$\pm$0.4 &48.1$\pm$1.5 &\textbf{58.5$\pm$0.4}  \\
        &F1 &80.3$\pm$0.2 &76.3$\pm$0.7 &\textbf{80.8$\pm$0.2} \\
        \hline
        \multirow{4}{*}{\textbf{Wisconsin}}
        &ACC &73.5$\pm$1.3 &72.4$\pm$1.5 &\textbf{73.6$\pm$1.5}  \\
        &NMI &46.3$\pm$3.8 &46.1$\pm$3.3 &\textbf{46.6$\pm$3.9} \\
        &ARI &50.9$\pm$3.3 &49.6$\pm$2.9 &\textbf{51.6$\pm$3.2}  \\
        &F1 &57.7$\pm$3.6 &54.3$\pm$4.6 &\textbf{58.0$\pm$3.2}  \\
        \hline
        \multirow{4}{*}{\textbf{Cornell}}
        &ACC &69.4$\pm$1.5 &68.6$\pm$1.5 &\textbf{70.2$\pm$1.5}  \\
        &NMI &36.9$\pm$2.0 &33.8$\pm$3.2 &\textbf{37.0$\pm$3.0}  \\
        &ARI &41.8$\pm$4.1 &41.1$\pm$4.6 &\textbf{44.5$\pm$6.0}  \\
        &F1 &49.1$\pm$2.9 &46.4$\pm$3.5 &\textbf{49.8$\pm$3.1}  \\
        \hline \hline
    \end{tabular}
    }}
    \label{tb_Ablation_eta_H}
    \vspace{-1.0em}
\end{table}
\begin{figure*}[!htb]
    \centering{
        \includegraphics[width=\textwidth]{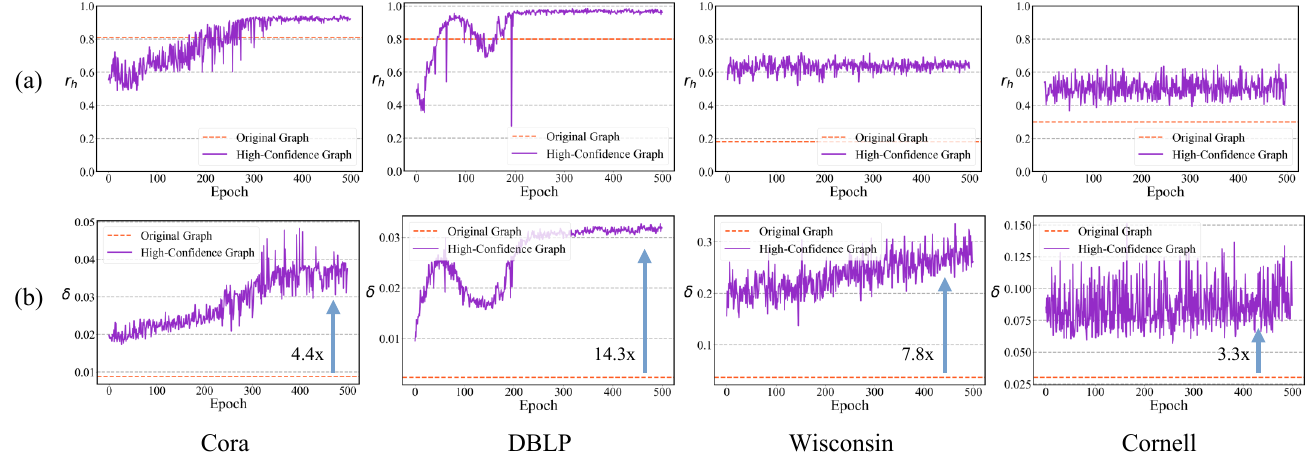}
        \captionsetup{justification=centering}
        \caption{Training curves of (a) homophily ratios $r_h$ and (b) graph neighborhood congener ratios $\delta$ in the original graph and the learned high-confidence graph $\textbf{H}$, respectively. It is evident that on the Cora, DBLP, Wisconsin, and Cornell datasets, $\textbf{H}$ improves $\delta$ by approximately 4.4, 14.3, 7.8, and 3.3 times, respectively, compared with the original graph. Meanwhile, the homophily of $\textbf{H}$ also improves across all datasets. This indicates that the quality of neighbors in $\textbf{H}$ is superior to that in the original graph.}
    \label{fig_H}
    }
    \vspace{-1.5em}
\end{figure*}
\begin{table*}[!h]
    \renewcommand{\arraystretch}{1.2}
    \caption{The clustering performance of migrating the NCA and AFC modules to the traditional CGC method CONVERT \cite{2023CONVERT-yang} and neighbor-based CGC method SCGC \cite{2023SCGC-Liu}.}
    \begin{center}
    \setlength{\tabcolsep}{0.8mm}{
    \begin{tabular}{c|cccc|cccc}
    % \begin{tblr}{hlines,
    %     cell{1-2}{1-9} = {c},
    %     cell{3-6}{1-9} = {l},
    %     cell{7-8}{1-9} = {c},
    %     cell{9-12}{1-9} = {l},
    % }
    \hline \hline 
    \multirow{2}*{\textbf{Method}} & \multicolumn{4}{c|}{\textbf{Cora}} & \multicolumn{4}{c}{\textbf{DBLP}}  \\
    % \cmidrule(lr){2-13}
     &\textbf{ACC} &\textbf{NMI} &\textbf{ARI} &\textbf{F1} &\textbf{ACC} &\textbf{NMI} &\textbf{ARI} &\textbf{F1} \\
    \hline
    \textbf{CONVERT}
    &73.9$\pm$1.0 &55.6$\pm$1.0 &50.2$\pm$1.5 &72.1$\pm$2.8
    &55.4$\pm$2.9 &22.7$\pm$2.7 &20.3$\pm$2.6 &55.4$\pm$3.3  \\
    \textbf{CONVERT+NeuCL} 
    &74.8$\pm$1.2(\textbf{$\uparrow $0.9}) &56.0$\pm$0.9(\textbf{$\uparrow $0.4}) &52.4$\pm$2.5(\textbf{$\uparrow $2.2}) &73.6$\pm$1.3(\textbf{$\uparrow $1.5})
    &58.2$\pm$3.4(\textbf{$\uparrow $2.8}) &24.5$\pm$2.5(\textbf{$\uparrow $1.8}) &22.3$\pm$2.5(\textbf{$\uparrow $2.0}) &58.5$\pm$3.3(\textbf{$\uparrow $3.1}) \\
    \hline
    \textbf{SCGC}
    &73.2$\pm$1.5 &55.6$\pm$0.9 &51.5$\pm$1.9 &70.3$\pm$2.1
    &67.1$\pm$1.9 &38.6$\pm$1.7 &35.7$\pm$2.0 &66.9$\pm$2.0 \\
    \textbf{SCGC+NeuCL} 
    &74.1$\pm$0.7(\textbf{$\uparrow $0.9}) &56.2$\pm$0.9(\textbf{$\uparrow $0.6}) &51.8$\pm$0.6(\textbf{$\uparrow $0.3}) &72.8$\pm$0.9(\textbf{$\uparrow $2.5})
    &74.4$\pm$1.1(\textbf{$\uparrow $7.3}) &44.2$\pm$1.0(\textbf{$\uparrow $5.6}) &44.4$\pm$2.2(\textbf{$\uparrow $8.7}) &74.4$\pm$1.0(\textbf{$\uparrow $7.5}) \\
    
    \midrule[1.0pt]
    \multirow{2}*{\textbf{Method}} & \multicolumn{4}{c|}{\textbf{Wisconsin}} & \multicolumn{4}{c}{\textbf{Cornell}}  \\
    % \cmidrule(lr){2-13}
     &\textbf{ACC} &\textbf{NMI} &\textbf{ARI} &\textbf{F1} &\textbf{ACC} &\textbf{NMI} &\textbf{ARI} &\textbf{F1} \\
    \hline
    \textbf{CONVERT}
    &59.9$\pm$2.0 &35.5$\pm$2.5 &33.0$\pm$2.7 &47.9$\pm$3.3
    &49.7$\pm$4.2 &22.4$\pm$5.2 &18.0$\pm$5.2 &38.4$\pm$4.6  \\
    \textbf{CONVERT+NeuCL} 
    &61.2$\pm$2.0(\textbf{$\uparrow $1.3}) &35.3$\pm$3.6($\downarrow $0.2) &33.9$\pm$3.0(\textbf{$\uparrow $0.9}) &49.3$\pm$3.2(\textbf{$\uparrow $1.4}) 
    &52.8$\pm$3.5(\textbf{$\uparrow $3.1}) &24.4$\pm$2.9(\textbf{$\uparrow $2.0}) &20.7$\pm$4.2(\textbf{$\uparrow $2.7}) &40.5$\pm$2.4(\textbf{$\uparrow $2.1}) \\
    \hline
    \textbf{SCGC}
    &45.9$\pm$4.6 &14.7$\pm$1.9 &11.1$\pm$3.4 &35.3$\pm$2.7 &52.3$\pm$1.8 &9.2$\pm$2.0 &14.3$\pm$2.5 &32.5$\pm$1.0 \\
    \textbf{SCGC+NeuCL}     
    &50.5$\pm$2.4(\textbf{$\uparrow $4.6}) &15.7$\pm$1.6(\textbf{$\uparrow $1.0}) &12.6$\pm$2.2(\textbf{$\uparrow $1.5}) &36.6$\pm$2.7(\textbf{$\uparrow $1.3}) &56.9$\pm$1.0(\textbf{$\uparrow $4.6}) &11.1$\pm$1.6(\textbf{$\uparrow $1.9}) &17.2$\pm$1.9(\textbf{$\uparrow $2.9}) &33.6$\pm$3.6(\textbf{$\uparrow $1.1}) \\
    \hline \hline
    \end{tabular}
    % \end{tblr}
    }
    \label{tb_Transfer}
\end{center}
\vspace{-1.5em}
\end{table*}
\subsection{Ablation Study on Each Component}
% In this subsection, we further verify the effectiveness of the proposed method from different perspectives.
In this subsection, we conduct an ablation study to evaluate the effectiveness of each component for our proposed method. The study includes three variants of our method: removing the AFC module (w/o AFC), removing the GDA module (w/o GDA), removing the NCA module (w/o NCA), as well as the full model.
We compare the clustering results of the proposed method and its variants on several graph datasets, as shown in Table \ref{tb_Ablation}. 
The results reveal the following observations: 
\begin{itemize}
    \item The two distribution alignment modules, GDA and NCA, effectively reinforce the consistency of semantic information between two distinct feature distributions from global and local perspectives, respectively. This design enhances the model's clustering performance to varying degrees. 
    \item By identifying high-confidence neighbors for neutral contrastive learning to enhance feature consistency, the AFC module significantly improves the clustering quality on both homophilic and heterophilic graph datasets.
\end{itemize}

The results demonstrate that our full model consistently outperforms its variants across all settings. This indicates that the proposed GDA, NCA, and AFC modules jointly contribute to the overall clustering performance, highlighting the effectiveness of our learning scheme on graphs with homophily variation.

\subsection{Computational Efficiency Analysis}
Theoretical analysis in Section \ref{sec_optimization} proves NeuCGC shares the same complexity order as InfoNCE loss, incurring no significant increase in time and space complexity. To empirically validate this analysis, we evaluate the training time cost (in seconds, s) and GPU memory usage (in GB) on a single NVIDIA RTX 2080 Super (8GB) GPU of our method against seven state-of-the-art CGC methods and two traditional DGC approaches across four datasets. As displayed in Fig. \ref{fig_complexity} using histogram, the experimental results clearly show that NeuCGC achieves comparable and practical training complexity consumption to the most efficient baselines, confirming that it achieves a reasonable balance between performance and computational efficiency. This alignment between theoretical and empirical complexity, combined with its leading clustering performance, demonstrates our method as both computationally efficient and representationally powerful for diverse graph homophily levels.
% The experimental results clearly show that despite integrating multiple contrastive objectives, NeuCGC achieves comparable and practical training complexity consumption to the most efficient methods, confirming that our approach achieves a reasonable balance between performance and computational efficiency. 

\subsection{Analysis of The Proposed Neutral Contrastive Learning}\label{sec_Ana_NeuCL}
In this subsection, we conduct several experiments to  to thoroughly analyze the effectiveness of our proposed neutral contrastive learning mechanism from multiple perspectives:
%including an ablation study in Table \ref{tb_Ablation_eta_H} on the learned neutral contrastive factor $\eta$ and the high-confidence graph $\textbf{H}$, a study in Fig. \ref{fig_H} that analyzing the homophily ratio and the graph neighborhood congener ratio of $\textbf{H}$, as well as a study in Table \ref{tb_Transfer} that migrating NCA module and AFC module to traditional CGC method and neighbor CGC method, respectively. 

\textit{1) Effectiveness of the neutral contrastive factor $\eta$ and the learned high-confidence graph.}
The neutral contrastive factor and learned graph $\textbf{H}$ are two key components in our proposed method. In \textcolor{blue}{(\ref{eq_L_NCA})}, $\eta$ is a coarse-grained neutral contrastive factor and determines the extent that neutral pairs are considered as positive in NCA module. In terms of $\textbf{H}$, it not only expands the neutral pairs but also learns a specific fine-grained neutral contrastive factor for each neutral pair. As shown in Table \ref{tb_Ablation_eta_H}, we record the clustering results of our method and other two variants, namely, removing $\eta$ and neutral pairs from NCA module (w/o $\eta$), and degenerating into InfoNCE objective via removing $\textbf{H}$ in AFC module (InfoNCE). From the table, we observe the following: 
\begin{itemize}
    \item The absence of coarse-grained neutral contrastive factor $\eta$ and neutral pairs causes our method to produce only suboptimal solution, especially on homophilic graph datasets. Regarding $\eta$, we provide more in-depth analysis in the Supplementary Material.
    \item When $\textbf{H}$ is abandoned, our $\mathcal{L}_{AFC}$ objective degenerates into the traditional InfoNCE objective. In contrast, our method, with the assistance of high-quality neutral pairs, achieves significant improvements in clustering performance on both homophilic and heterophilic graphs. 
\end{itemize}

The above results collectively demonstrate the effectiveness of the proposed neutral contrastive learning mechanism.

\textit{2) The quality of the high-confidence graph $\mathbf{H}$.}
In AFC module, we search new congener nodes and incorporating them into new neighborhoods to obtain $\textbf{H}$. These nodes are then used to form neutral pairs along with the central node, thereby promoting the model to learn richer neighborhood consistency information. However, during practical model training, the newly discovered nodes may either be congeners or other semantically inconsistent nodes which could misguide the representation learning. Here, it is necessary to analyze the quality of the discovered neighbors. We primarily focus on two graph properties: homophily ratio $r_h$ and graph neighborhood congener ratio $\delta$. Our goal is to ensure that more congeners are added as new neighbors and simultaneously these neighbors do not degrade the graph’s homophily. Specifically, let $r_{h0}$ and $\delta_0$ represent the homophily ratio and graph neighborhood congener ratio of the original graph, respectively, while $r_h$ and $\delta$ represent those of $\textbf{H}$. According to \textcolor{blue}{(\ref{eq_cmp_H})}, the inequality $\delta \ge \delta_0$ holds. In this case, if $r_h > r_{h0}$, it indicates that more congeners are present among new neighbors, and we can conclude that $\textbf{H}$ is of higher quality. Conversely, if $r_h \le r_{h0}$, then there are more non-congeners, and the quality of neighbors in $\textbf{H}$ is poorer.

As shown in Fig. \ref{fig_H}, we record the homophily ratio $r_h$ and graph neighborhood congener ratio $\delta$ of $\textbf{H}$ at each epoch during model training. Compared to the original graph, $\textbf{H}$ has higher $r_h$ and $\delta$, indicating that all or most of the newly added neighbors are of the same category. This results in an expected higher-quality graph, which mitigates the negative impact of low homophily and facilitates the model’s feature consistency learning, thereby improving the clustering performance.

\textit{3) Improving traditional CGC method and neighbor-based CGC  method.}
To further investigate the effectiveness of our proposed neutral contrastive learning, we evaluate the clustering performance of two varients (CONVERT+NeuCL and SCGC+NeuCL) by jointly migrating the proposed NCA module and AFC module to traditional CGC method CONVERT \cite{2023CONVERT-yang} and homophily assumption-based neighbor CGC method SCGC \cite{2023SCGC-Liu}, respectively. 
As shown in Table \ref{tb_Transfer}, in comparison to their initial clustering results, our neutral contrastive learning achieves considerable enhancements across both homophilic and heterophilic graph datasets in a plug-and-play manner. This is because our modules enable those models to capture more reliable and discriminative features from neighborhoods, leading to improved clustering performance. 
From these experiments, it is evident that our method not only produces superior clustering results but also has the potential to enhance other CGC approaches.
\begin{figure*}[!t]
    \centering{
        \includegraphics[width=1.0\textwidth]{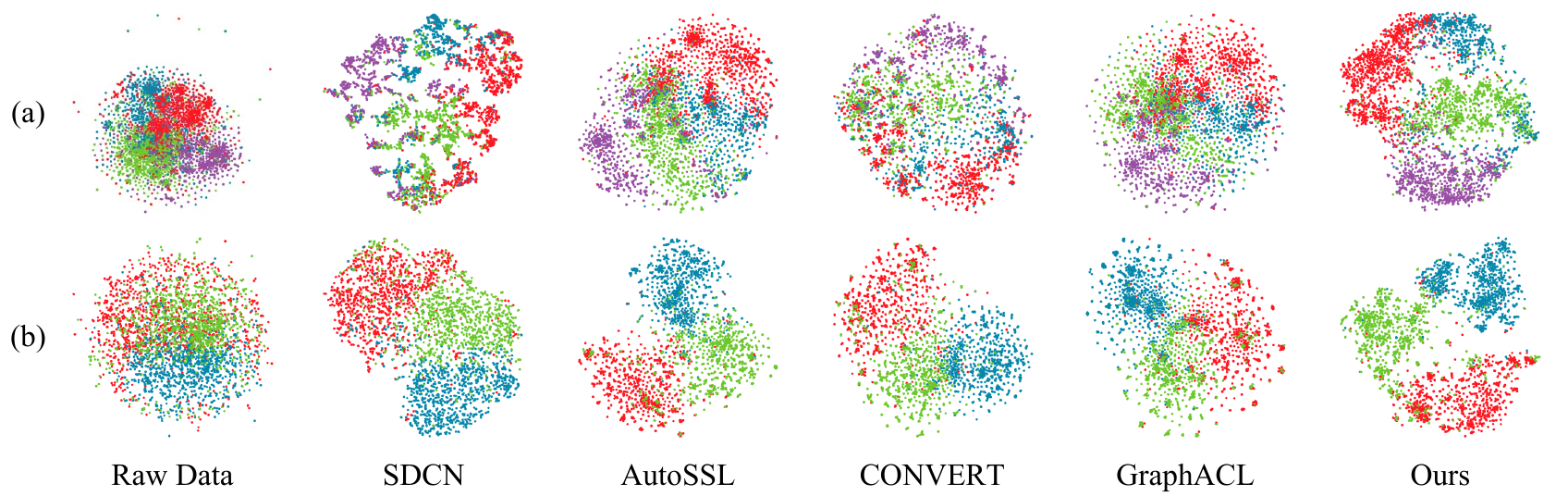}
        \captionsetup{justification=centering}
        \caption{t-SNE visualization comparison on (a) DBLP and (b) ACM datasets.}
    \label{fig_tsne}
    }
    \vspace{-1.0em}
\end{figure*}

\subsection{Visualization Analysis}
To evaluate the inherent clustering structure, we employ visualization to depict the distribution of the learned embeddings. Specifically, we compare our NeuCGC with other baselines in embedding learning using the t-SNE \cite{2008tsne-van} algorithm on the DBLP and ACM datasets. As shown in Fig. \ref{fig_tsne}, the visual results indicate that our method exhibits a more enhanced clustering structure compared to other methods.

\subsection{Parameter Analysis}
% We evaluate the sensitivity of our method to two loss weight $\lambda_1$, $\lambda_2$, and the high confidence threshold $\textit{k}$. As shown in Fig. \ref{fig_params}, the performance of our method is generally robust to variations in these hyper-parameters across most datasets, except for ACM and DBLP. For ACM and DBLP, our method achieves stable clustering results over a wide range of values [0.1, 1.0]. Thus, our method demonstrates robustness in hyper-parameter sensitivity, maintaining stable performance across various datasets. This indicates that our approach is reliable and adaptable, even with changes in hyper-parameter values.
% We evaluate the sensitivity of our method to the two loss weights $\lambda_1$, $\lambda_2$, the high confidence threshold $\textit{k}$, as well as the latent representation dimension $d$. 
% As shown in Fig. \ref{fig_lambdas} and Fig. \ref{fig_k_hd}, our method exhibits sensitivity to $\lambda_1$ and $\lambda_2$, and the clustering performance can be enhanced when they are set to suitable values. Moreover, it demonstrates remarkable stability across varying values of $\textit{k}$.
% Specifically, 
In this section, we conduct a comprehensive sensitivity analysis of key hyper-parameters. First, we jointly examine the two loss weighting factors $\lambda_1$ and $\lambda_2$, exploring values from $\{0.01, 0.1, 0.5, 1.0, 5, 10, 100\}$. Fig. \ref{fig_lambdas} presents the clustering accuracy across four datasets using different parameter combinations through 3D bar charts. Additionally, we investigate the effects of the high-confidence threshold $k$ (ranging from 0.1 to 1.0) and latent embedding dimension $d$ (varying from 100 to 2000), with results visualized in Fig. \ref{fig_k_hd}. From these figures, we derive several detailed observations below:
\begin{figure}[!t]
    \centering{
        \includegraphics[width=0.48\textwidth]{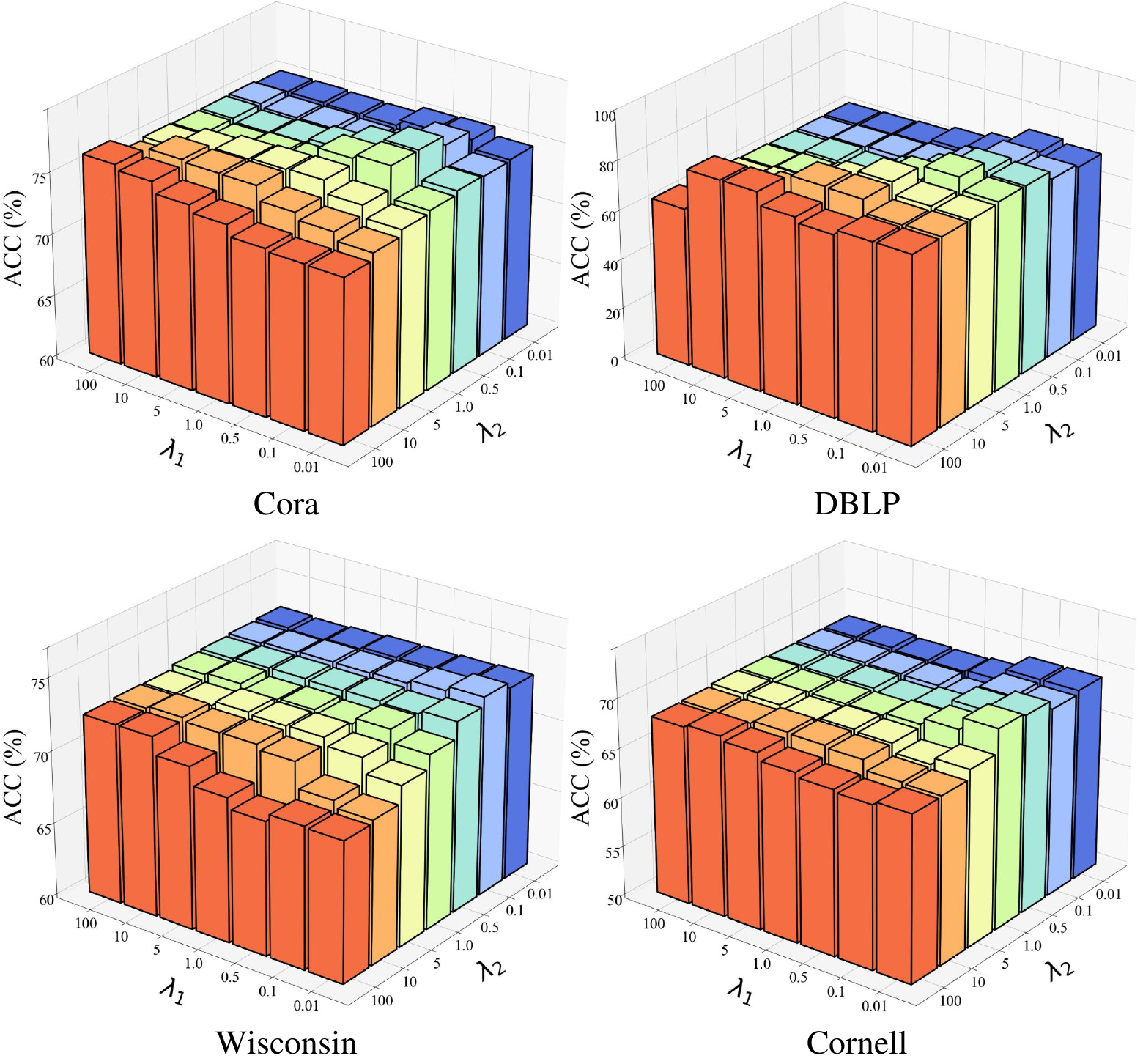}
        \captionsetup{justification=centering}
        \caption{Sensitivity analysis of loss weighting factors $\lambda_1$ and $\lambda_2$.}
    \label{fig_lambdas}
    }
    \vspace{-0.6em}
\end{figure}
\begin{figure}[!t]
    \centering{
        \includegraphics[width=0.48\textwidth]{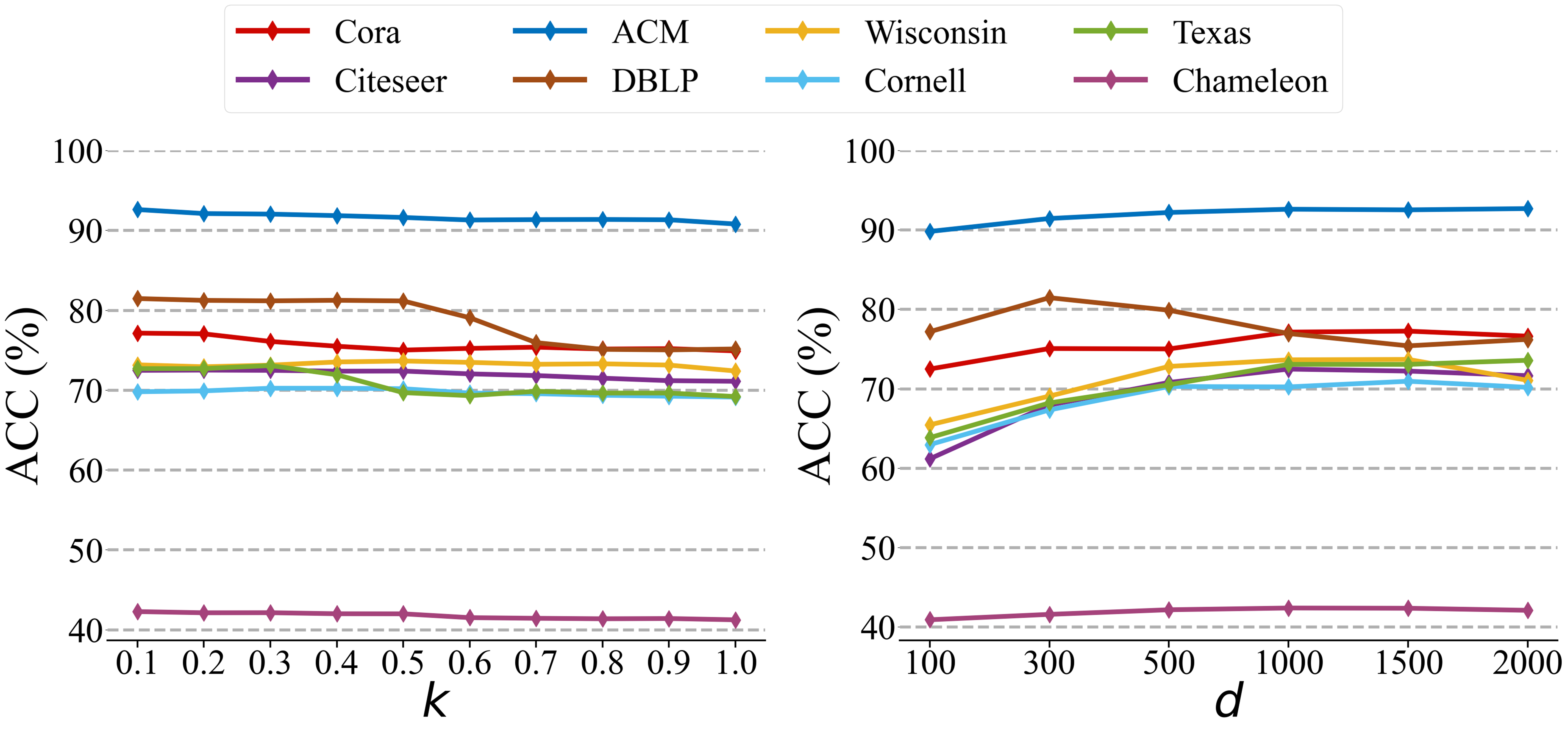}
        \captionsetup{justification=centering}
        \caption{Sensitivity analysis of the confidence signal $\textit{k}$ and dimension $\textit{d}$ of the latent embeddings.}
    \label{fig_k_hd}
    }
    \vspace{-0.8em}
\end{figure}
\begin{itemize}
    \item The model typically achieves high performance when $\lambda_1$ and $\lambda_2$ are relatively close, e.g., $\lambda_2$ is 1-10 times larger than $\lambda_1$. For instance, it attains optimal accuracy on Cora and DBLP at ($\lambda_1=0.1$, $\lambda_2=1.0$) and on Wisconsin at ($\lambda_1=0.01$, $\lambda_2=0.1$). However, when the gap between them is large, such as ($\lambda_1=0.01$, $\lambda_2=100$) or ($\lambda_1=100$, $\lambda_2=0.01$), the clustering accuracy significantly decreases across all four datasets. This suggests that appropriately balanced values of $\lambda_1$ and $\lambda_2$ promote the model to learn more discriminative features, leading to superior cluster assignment.
    \item The model exhibits strong robustness when the high-confidence threshold $k$ varies within the range $[0.1, 0.4]$ across all datasets. However, as $k$ increases beyond 0.4 and approaches 1.0, a noticeable decline in clustering accuracy is observed on Cora, DBLP, and Texas datasets. This degradation may be attributed to confirmation bias from over-confident pseudo-labels \cite{2020pseudo-label-Arazo}, which can adversely affect the model’s representation learning and ultimately degrade clustering performance.
    \item For embedding dimensions, insufficient $d$ (e.g., $<300$) may cause feature loss, thereby reducing representation quality. Whereas, performance typically improves as it increases on most datasets, with the trend converging around $d=1000$. However, an exception is observed on the DBLP dataset, where the clustering accuracy peaks at $d=300$ (matching its initial dimension 334) before declining. This suggests higher dimensions may introduce noise that interfere with the clustering process. Based on these findings, we set default dimension $d$ to 500 or 1000 for most datasets, while adopting $d=300$ for DBLP. 
\end{itemize} 

\subsection{Convergence Analysis}
Here, we empirically analyze the convergence behavior of our approach. As illustrated in Fig. \ref{fig_training_curves}, we present the loss values and several clustering performance metrics over the course of training on ACM and DBLP datasets, along with the number of iterations. The results reveal that the loss values decrease monotonically as the number of iterations increases. Notably, the values drop sharply within the first 200 iterations, which aligns with a significant improvement in clustering metric scores, before stabilizing in later iterations. These findings demonstrate that our method achieves rapid convergence and strong stability during training, ensuring superior clustering performance.
\begin{figure}[!t]
    \centering{
        \includegraphics[width=0.48\textwidth]{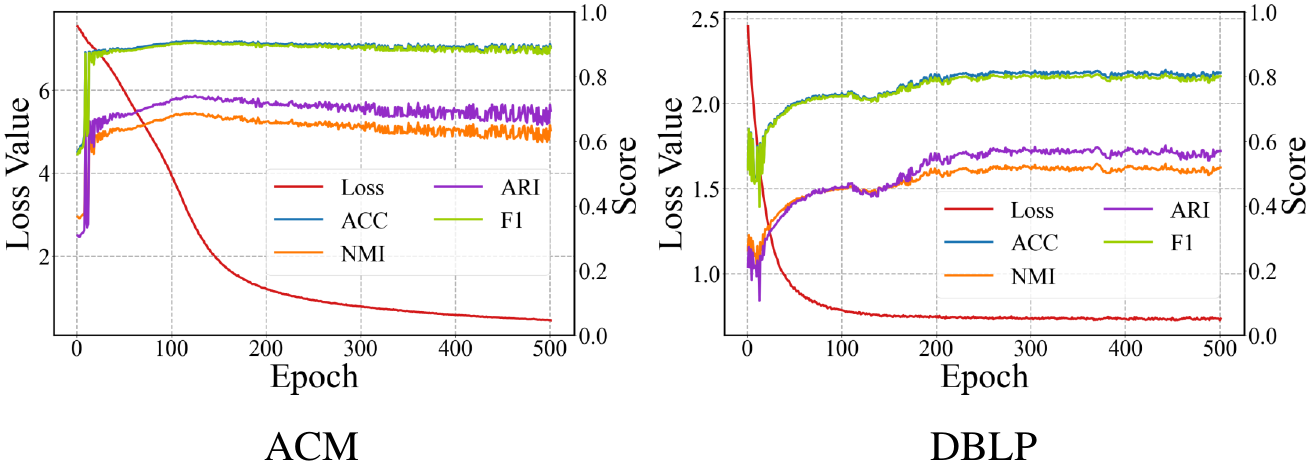}
        \captionsetup{justification=centering}
        \caption{Training curves on ACM and DBLP datasets.}
    \label{fig_training_curves}
    }
    \vspace{-0.5em}
\end{figure}

\section{Conclusion}
In this study, we propose a homophily-aware neighborhood contrastive learning framework NeuCGC for graph clustering. Unlike existing approaches that uniformly treat all neighboring nodes as positive pairs, our method introduces neutral pairs, which are adaptively weighted as positive pairs based on the graph's homophily level through a learned neighborhood contrastive factor. Leveraging this strategy, we develop neighborhood distribution neutral contrastive alignment and adaptive feature consistency neutral contrastive learning mechanisms. Experimental results demonstrate that our approach significantly outperforms state-of-the-art graph clustering methods in both performance and robustness across various graph settings.

\section*{Acknowledgment}
This work was supported in part by the GuangDong Basic and Applied Basic Research Foundation (Project No.2025A1515011692, 2023A1515030154, 2024A1515011437), in part by the Fundamental Research Funds for the Central Universities (Project No. 2024ZYGXZR077), in part by TCL Science and Technology Innovation Fund (Project No. 20231752), in part by the Research Grants Council of the Hong Kong Special Administration Region (Projection No. CityU 11206622).

%\section*{Acknowledgments}
%This should be a simple paragraph before the References to thank those individuals and institutions who have supported your work on this article.

 % argument is your BibTeX string definitions and bibliography database(s)
%\bibliography{IEEEabrv,../bib/paper}
\bibliographystyle{IEEEtran}
\bibliography{reference}

@inproceedings{2015Amazon-mcauley,
  title={Image-based recommendations on styles and substitutes},
  author={McAuley, Julian and Targett, Christopher and Shi, Qinfeng and Van Den Hengel, Anton},
  booktitle={Proceedings of the 38th international ACM SIGIR conference on research and development in information retrieval},
  pages={43--52},
  year={2015}
}

@inproceedings{2022ProGCL-xia,
  title={ProGCL: Rethinking Hard Negative Mining in Graph Contrastive Learning},
  author={Xia, Jun and Wu, Lirong and Wang, Ge and Li, Stan Z.},
  booktitle={International Conference on Machine Learning},
  year={2022}
}

@ARTICLE{2015KD-Hinton,
  title={Distilling the Knowledge in a Neural Network}, 
  author={Geoffrey Hinton and Oriol Vinyals and Jeff Dean},
  year={2015},
  journal={arXiv preprint arXiv:1503.02531},
  archivePrefix={arXiv}, 
}

@article{2019DGI-velickovic,
  title={Deep graph infomax.},
  author={Velickovic, Petar and Fedus, William and Hamilton, William L and Li{\`o}, Pietro and Bengio, Yoshua and Hjelm, R Devon},
  journal={ICLR (Poster)},
  volume={2},
  number={3},
  pages={4},
  year={2019}
}

@ARTICLE{2016VGAE-kipf,
  title={Variational graph auto-encoders},
  author={Kipf, Thomas N and Welling, Max},
  journal={arXiv preprint arXiv:1611.07308},
  year={2016},
  eprint={1611.07308},
  archivePrefix={arXiv}
}

@inproceedings{2017MGAE-wang,
  title={MGAE: Marginalized graph autoencoder for graph clustering},
  author={Wang, Chun and Pan, Shirui and Long, Guodong and Zhu, Xingquan and Jiang, Jing},
  booktitle={Proceedings of the 2017 ACM on Conference on Information and Knowledge Management},
  pages={889--898},
  year={2017}
}

@ARTICLE{2016GCN-kipf,
  title={Semi-supervised classification with graph convolutional networks},
  author={Kipf, Thomas N and Welling, Max},
  journal={arXiv preprint arXiv:1609.02907},
  year={2016},
  eprint={1609.02907},
  archivePrefix={arXiv}
}

@inproceedings{2020AGE-cui,
  title={Adaptive graph encoder for attributed graph embedding},
  author={Cui, Ganqu and Zhou, Jie and Yang, Cheng and Liu, Zhiyuan},
  booktitle={Proceedings of the 26th ACM SIGKDD International Conference on Knowledge Discovery and Data Mining},
  pages={976--985},
  year={2020}
}

@inproceedings{2019AGAE-Tao,
  title     = {Adversarial Graph Embedding for Ensemble Clustering},
  author    = {Tao, Zhiqiang and Liu, Hongfu and Li, Jun and Wang, Zhaowen and Fu, Yun},
  booktitle = {Proceedings of the Twenty-Eighth International Joint Conference on Artificial Intelligence, {IJCAI-19}},
  pages     = {3562--3568},
  year      = {2019}
}

@ARTICLE{2020ARGA-Pan,
  author={Pan, Shirui and Hu, Ruiqi and Fung, Sai-Fu and Long, Guodong and Jiang, Jing and Zhang, Chengqi},
  journal={IEEE Transactions on Cybernetics}, 
  title={Learning Graph Embedding With Adversarial Training Methods}, 
  year={2020},
  volume={50},
  number={6},
  pages={2475-2487}
}

@ARTICLE{2023AGC-Zhang,
  author={Zhang, Xiaotong and Liu, Han and Li, Qimai and Wu, Xiao-Ming and Zhang, Xianchao},
  journal={IEEE Transactions on Knowledge and Data Engineering}, 
  title={Adaptive Graph Convolution Methods for Attributed Graph Clustering}, 
  year={2023},
  volume={35},
  number={12},
  pages={12384-12399},
}

@ARTICLE{2024SDAC-DA-Berahmand,
  author={Berahmand, Kamal and Bahadori, Sondos and Abadeh, Maryam Nooraei and Li, Yuefeng and Xu, Yue},
  journal={IEEE Transactions on Knowledge and Data Engineering}, 
  title={SDAC-DA: Semi-Supervised Deep Attributed Clustering Using Dual Autoencoder}, 
  year={2024},
  volume={36},
  number={11},
  pages={6989-7002},
}

@ARTICLE{2024NS4GC-Liu,
  author={Liu, Yunhui and Gao, Xinyi and He, Tieke and Zheng, Tao and Zhao, Jianhua and Yin, Hongzhi},
  journal={IEEE Transactions on Knowledge and Data Engineering}, 
  title={Reliable Node Similarity Matrix Guided Contrastive Graph Clustering}, 
  year={2024},
  volume={36},
  number={12},
  pages={9123-9135},
}

@inproceedings{2020CGCN-hui,
  title={Collaborative graph convolutional networks: Unsupervised learning meets semi-supervised learning},
  author={Hui, Binyuan and Zhu, Pengfei and Hu, Qinghua},
  booktitle={Proceedings of the AAAI Conference on Artificial Intelligence},
  volume={34},
  number={04},
  pages={4215--4222},
  year={2020}
}

@inproceedings{2022GraphMAE-hou,
  title={GraphMAE: Self-supervised masked graph autoencoders},
  author={Hou, Zhenyu and Liu, Xiao and Cen, Yukuo and Dong, Yuxiao and Yang, Hongxia and Wang, Chunjie and Tang, Jie},
  booktitle={Proceedings of the 28th ACM SIGKDD Conference on Knowledge Discovery and Data Mining},
  pages={594--604},
  year={2022}
}

@inproceedings{2006CL-hadsell,
  title={Dimensionality reduction by learning an invariant mapping},
  author={Hadsell, Raia and Chopra, Sumit and LeCun, Yann},
  booktitle={2006 IEEE computer society conference on computer vision and pattern recognition (CVPR'06)},
  volume={2},
  pages={1735--1742},
  year={2006}
}

@ARTICLE{2018CPC-oord,
  title={Representation learning with contrastive predictive coding},
  author={Oord, Aaron van den and Li, Yazhe and Vinyals, Oriol},
  year={2018},
  journal={arXiv preprint arXiv:1807.03748},
  archivePrefix={arXiv}
}

@inproceedings{2020MVGRL-hassani,
  title={Contrastive multi-view representation learning on graphs},
  author={Hassani, Kaveh and Khasahmadi, Amir Hosein},
  booktitle={International Conference on Machine Learning},
  pages={4116--4126},
  year={2020}
}

@inproceedings{2020GraphCL-you,
  title={Graph contrastive learning with augmentations},
  author={You, Yuning and Chen, Tianlong and Sui, Yongduo and Chen, Ting and Wang, Zhangyang and Shen, Yang},
  booktitle={Advances in Neural Information Processing Systems},
  volume={33},
  pages={5812--5823},
  year={2020}
}

@inproceedings{2022AutoGCL-yin,
  title={AutoGCL: Automated graph contrastive learning via learnable view generators},
  author={Yin, Yihang and Wang, Qingzhong and Huang, Siyu and Xiong, Haoyi and Zhang, Xiang},
  booktitle={Proceedings of the AAAI Conference on Artificial Intelligence},
  volume={36},
  number={8},
  pages={8892--8900},
  year={2022}
}

@ARTICLE{2022BGRL-thakoor-Large,
  title={Large-scale representation learning on graphs via bootstrapping},
  author={Thakoor, Shantanu and Tallec, Corentin and Azar, Mohammad Gheshlaghi and Azabou, Mehdi and Dyer, Eva L and Munos, Remi and Veli{\v{c}}kovi{\'c}, Petar and Valko, Michal},
  journal={arXiv preprint arXiv:2102.06514},
  year={2021},
  eprint={2102.06514},
  archivePrefix={arXiv}
}

@inproceedings{2022AGC-DRR-Gong, 
    title = {Attributed Graph Clustering with Dual Redundancy Reduction}, 
    author = {Gong, Lei and Zhou, Sihang and Tu, Wenxuan and Liu, Xinwang}, 
    booktitle = {Proceedings of the Thirty-First International Joint Conference on Artificial Intelligence, {IJCAI-22}}, 
    pages = {3015--3021}, 
    year = {2022}
}

@inproceedings{2022AFGRL-Lee,
  title={Augmentation-Free Self-Supervised Learning on Graphs},
  author={Lee, Namkyeong and Lee, Junseok and Park, Chanyoung},
  booktitle={Proceedings of the AAAI Conference on Artificial Intelligence},
  volume={36},
  number={7},
  pages={7372--7380},
  year={2022}
}

@ARTICLE{2023SCGC-Liu,
  author={Liu, Yue and Yang, Xihong and Zhou, Sihang and Liu, Xinwang and Wang, Siwei and Liang, Ke and Tu, Wenxuan and Li, Liang},
  journal={IEEE Transactions on Neural Networks and Learning Systems}, 
  title={Simple Contrastive Graph Clustering}, 
  year={2024},
  volume={35},
  number={10},
  pages={13789-13800}
}

@inproceedings{2023CONVERT-yang,
  title={CONVERT: Contrastive graph clustering with reliable augmentation},
  author={Yang, Xihong and Tan, Cheng and Liu, Yue and Liang, Ke and Wang, Siwei and Zhou, Sihang and Xia, Jun and Li, Stan Z and Liu, Xinwang and Zhu, En},
  booktitle={Proceedings of the 31st ACM International Conference on Multimedia},
  pages={319--327},
  year={2023}
}

@inproceedings{2024TopoGCL-chen,
  title={TopoGCL: Topological Graph Contrastive Learning},
  author={Chen, Yuzhou and Frias, Jose and Gel, Yulia R},
  booktitle={Proceedings of the AAAI Conference on Artificial Intelligence},
  volume={38},
  number={10},
  pages={11453--11461},
  year={2024}
}

@inproceedings{2023HomoGCL-Li,
    author = {Li, Wen-Zhi and Wang, Chang-Dong and Xiong, Hui and Lai, Jian-Huang},
    title = {HomoGCL: Rethinking Homophily in Graph Contrastive Learning},
    year = {2023},
    booktitle = {Proceedings of the 29th ACM SIGKDD Conference on Knowledge Discovery and Data Mining},
    pages = {1341–1352},
}

@inproceedings{2023HSAN-liu,
  title={Hard sample aware network for contrastive deep graph clustering},
  author={Liu, Yue and Yang, Xihong and Zhou, Sihang and Liu, Xinwang and Wang, Zhen and Liang, Ke and Tu, Wenxuan and Li, Liang and Duan, Jingcan and Chen, Cancan},
  booktitle={Proceedings of the AAAI Conference on Artificial Intelligence},
  volume={37},
  number={7},
  pages={8914--8922},
  year={2023}
}

@ARTICLE{2023SCAGC-Xia-TMM,
  author={Xia, Wei and Wang, Qianqian and Gao, Quanxue and Yang, Ming and Gao, Xinbo},
  journal={IEEE Transactions on Multimedia}, 
  title={Self-Consistent Contrastive Attributed Graph Clustering With Pseudo-Label Prompt}, 
  year={2023},
  volume={25},
  number={},
  pages={6665-6677}
}

@inproceedings{2023CCGC-Yang, 
    author={Yang, Xihong and Liu, Yue and Zhou, Sihang and Wang, Siwei and Tu, Wenxuan and Zheng, Qun and Liu, Xinwang and Fang, Liming and Zhu, En}, 
    title={Cluster-Guided Contrastive Graph Clustering Network}, 
    volume={37}, 
    number={9}, 
    booktitle={Proceedings of the AAAI Conference on Artificial Intelligence}, 
    year={2023},  
    pages={10834-10842} 
}

@inproceedings{2024HeterGCL-Wang,
  title     = {HeterGCL: Graph Contrastive Learning Framework on Heterophilic Graph},
  author    = {Wang, Chenhao and Liu, Yong and Yang, Yan and Li, Wei},
  booktitle = {Proceedings of the Thirty-Third International Joint Conference on
               Artificial Intelligence, {IJCAI-24}},
  pages     = {2397--2405},
  year      = {2024}
}

@inproceedings{2023NCLA-shen-neighbor,
  title={Neighbor contrastive learning on learnable graph augmentation},
  author={Shen, Xiao and Sun, Dewang and Pan, Shirui and Zhou, Xi and Yang, Laurence T},
  booktitle={Proceedings of the AAAI Conference on Artificial Intelligence},
  volume={37},
  number={8},
  pages={9782--9791},
  year={2023}
}

@ARTICLE{2022LocalGCL-zhang-localized,
  title={Localized contrastive learning on graphs},
  author={Zhang, Hengrui and Wu, Qitian and Wang, Yu and Zhang, Shaofeng and Yan, Junchi and Yu, Philip S},
  journal={arXiv preprint arXiv:2212.04604},
  year={2022},
  eprint={2212.04604},
  archivePrefix={arXiv}
}

@ARTICLE{2021GraphMLP-hu-graph,
  title={Graph-MLP: Node classification without message passing in graph},
  author={Hu, Yang and You, Haoxuan and Wang, Zhecan and Wang, Zhicheng and Zhou, Erjin and Gao, Yue},
  journal={arXiv preprint arXiv:2106.04051},
  year={2021},
  eprint={2106.04051},
  archivePrefix={arXiv}
}

@inproceedings{2022DSSL-xiao-decoupled,
  title={Decoupled self-supervised learning for graphs},
  author={Xiao, Teng and Chen, Zhengyu and Guo, Zhimeng and Zhuang, Zeyang and Wang, Suhang},
  booktitle = {Advances in Neural Information Processing Systems},
  volume={35},
  pages={620--634},
  year={2022}
}

@inproceedings{2023GraphACL-Xiao,
  title={Simple and Asymmetric Graph Contrastive Learning without Augmentations},
  author={Xiao, Teng and Zhu, Huaisheng and Chen, Zhengyu and Wang, Suhang},
  booktitle={Advances in Neural Information Processing Systems},
  volume = {36},
  pages = {16129--16152},
  year={2023}
}

@inproceedings{2023DGCN-pan-beyond,
  title={Beyond homophily: Reconstructing structure for graph-agnostic clustering},
  author={Pan, Erlin and Kang, Zhao},
  booktitle={International Conference on Machine Learning},
  pages={26868--26877},
  year={2023}
}

@inproceedings{2024AHGFC-wen-homophily,
  title={Homophily-Related: Adaptive Hybrid Graph Filter for Multi-View Graph Clustering},
  author={Wen, Zichen and Ling, Yawen and Ren, Yazhou and Wu, Tianyi and Chen, Jianpeng and Pu, Xiaorong and Hao, Zhifeng and He, Lifang},
  booktitle={Proceedings of the AAAI Conference on Artificial Intelligence},
  volume={38},
  number={14},
  pages={15841--15849},
  year={2024}
}

@article{2022KDGA-wu-knowledge,
  title={Knowledge distillation improves graph structure augmentation for graph neural networks},
  author={Wu, Lirong and Lin, Haitao and Huang, Yufei and Li, Stan Z},
  journal={Advances in Neural Information Processing Systems},
  volume={35},
  pages={11815--11827},
  year={2022}
}

@article{1951KL-kullback,
  title={On information and sufficiency},
  author={Kullback, Solomon and Leibler, Richard A},
  journal={The annals of mathematical statistics},
  volume={22},
  number={1},
  pages={79--86},
  year={1951}
}

@article{2008Cora-esen-Collective,
  title={Collective classification in network data},
  author={Sen, Prithviraj and Namata, Galileo and Bilgic, Mustafa and Getoor, Lise and Galligher, Brian and Eliassi-Rad, Tina},
  journal={AI magazine},
  volume={29},
  number={3},
  pages={93--93},
  year={2008}
}

@ARTICLE{2022survey-liu,
  title={A Survey of Deep Graph Clustering: Taxonomy, Challenge, Application, and Open Resource},
  author={Liu, Yue and Xia, Jun and Zhou, Sihang and Yang, Xihong and Liang, Ke and Fan, Chenchen and Zhuang, Yan and Li, Stan Z and Liu, Xinwang and He, Kunlun},
  year={2022},
  journal={arXiv preprint arXiv:2211.12875},
  archivePrefix={arXiv}
}

@inproceedings{2009Texas-tang,
  title={Social influence analysis in large-scale networks},
  author={Tang, Jie and Sun, Jimeng and Wang, Chi and Yang, Zi},
  booktitle={Proceedings of the 15th ACM SIGKDD International Conference on Knowledge Discovery and Data Mining},
  pages={807--816},
  year={2009}
}

@ARTICLE{2023SGC-Liu,
  author={Liu, Cheng and Wu, Si and Li, Rui and Jiang, Dazhi and Wong, Hau-San},
  journal={IEEE Transactions on Knowledge and Data Engineering}, 
  title={Self-Supervised Graph Completion for Incomplete Multi-View Clustering}, 
  year={2023},
  volume={35},
  number={9},
  pages={9394-9406}
}

@ARTICLE{2024LaSA-Liu,
  author={Liu, Cheng and Li, Rui and Che, Hangjun and Leung, Man-Fai and Wu, Si and Yu, Zhiwen and Wong, Hau-San},
  journal={IEEE Transactions on Knowledge and Data Engineering}, 
  title={Latent Structure-Aware View Recovery for Incomplete Multi-View Clustering}, 
  year={2024},
  volume={36},
  number={12},
  pages={8655-8669}
}

@inproceedings{2020SDCN-bo,
  title={Structural deep clustering network},
  author={Bo, Deyu and Wang, Xiao and Shi, Chuan and Zhu, Meiqi and Lu, Emiao and Cui, Peng},
  booktitle={Proceedings of the Web Conference 2020},
  pages={1400--1410},
  year={2020}
}

@inproceedings{2021DFCN-tu-deep,
  title={Deep fusion clustering network},
  author={Tu, Wenxuan and Zhou, Sihang and Liu, Xinwang and Guo, Xifeng and Cai, Zhiping and Zhu, En and Cheng, Jieren},
  booktitle={Proceedings of the AAAI Conference on Artificial Intelligence},
  volume={35},
  number={11},
  pages={9978--9987},
  year={2021}
}

@inproceedings{2022DCRN-liu,
  title={Deep graph clustering via dual correlation reduction},
  author={Liu, Yue and Tu, Wenxuan and Zhou, Sihang and Liu, Xinwang and Song, Linxuan and Yang, Xihong and Zhu, En},
  booktitle={Proceedings of the AAAI Conference on Artificial Intelligence},
  volume={36},
  number={7},
  pages={7603--7611},
  year={2022}
}

@inproceedings{2022AutoSSL-jin,
    title={Automated Self-Supervised Learning for Graphs},
    author={Wei Jin and Xiaorui Liu and Xiangyu Zhao and Yao Ma and Neil Shah and Jiliang Tang},
    booktitle={International Conference on Learning Representations},
    year={2022}
}

@inproceedings{2024DyFSS-zhu,
  title={Every Node is Different: Dynamically Fusing Self-Supervised Tasks for Attributed Graph Clustering},
  author={Zhu, Pengfei and Wang, Qian and Wang, Yu and Li, Jialu and Hu, Qinghua},
  booktitle={Proceedings of the AAAI Conference on Artificial Intelligence},
  volume={38},
  number={15},
  pages={17184--17192},
  year={2024}
}

@inproceedings{2024DGCluster-bhowmick,
  title={DGCLUSTER: A Neural Framework for Attributed Graph Clustering via Modularity Maximization},
  author={Bhowmick, Aritra and Kosan, Mert and Huang, Zexi and Singh, Ambuj and Medya, Sourav},
  booktitle={Proceedings of the AAAI Conference on Artificial Intelligence},
  volume={38},
  number={10},
  pages={11069--11077},
  year={2024}
}

@article{2008tsne-van,
  title={Visualizing data using t-SNE.},
  author={Van der Maaten, Laurens and Hinton, Geoffrey},
  journal={Journal of Machine Learning Research},
  volume={9},
  number={11},
  year={2008}
}

@ARTICLE{2024SymKL-Yao,
      title={Symmetric KL-divergence by Stein's Method}, 
      author={Liu-Quan Yao and Song-Hao Liu},
      year={2024},
      journal={arXiv preprint arXiv:2401.11381},
      archivePrefix={arXiv}
}

@inproceedings{2023SKL-GIST,
    author = {Ruan, Jiacheng and Gao, Jingsheng and Xie, Mingye and Xiang, Suncheng and Yu, Zefang and Liu, Ting and Fu, Yuzhuo and Qu, Xiaoye},
    title = {GIST: Improving Parameter Efficient Fine-Tuning via Knowledge Interaction},
    year = {2024},
    booktitle = {Proceedings of the 32nd ACM International Conference on Multimedia},
    pages = {8835–8844},
}

@inproceedings{2021SKL-RDrop,
 author = {liang, xiaobo and Wu, Lijun and Li, Juntao and Wang, Yue and Meng, Qi and Qin, Tao and Chen, Wei and Zhang, Min and Liu, Tie-Yan},
 booktitle = {Advances in Neural Information Processing Systems},
 pages = {10890--10905},
 title = {R-Drop: Regularized Dropout for Neural Networks},
 volume = {34},
 year = {2021}
}

@inproceedings{2021SKL-HGIB,
  title     = {Heterogeneous Graph Information Bottleneck},
  author    = {Yang, Liang and Wu, Fan and Zheng, Zichen and Niu, Bingxin and Gu, Junhua and Wang, Chuan and Cao, Xiaochun and Guo, Yuanfang},
  booktitle = {Proceedings of the Thirtieth International Joint Conference on Artificial Intelligence, {IJCAI-21}},
  pages     = {1638--1645},
  year      = {2021}
}

@ARTICLE{2020MuImax-Tschannen,
      title={On Mutual Information Maximization for Representation Learning}, 
      author={Michael Tschannen and Josip Djolonga and Paul K. Rubenstein and Sylvain Gelly and Mario Lucic},
      year={2020},
      journal={arXiv preprint arXiv:1907.13625},
      archivePrefix={arXiv}
}

@InProceedings{2019I-NCE-poole19a,
  title = 	 {On Variational Bounds of Mutual Information},
  author =       {Poole, Ben and Ozair, Sherjil and Van Den Oord, Aaron and Alemi, Alex and Tucker, George},
  booktitle = 	 {Proceedings of the 36th International Conference on Machine Learning},
  pages = 	 {5171--5180},
  year = 	 {2019},
  volume = 	 {97}
}

@ARTICLE{2020GRACE-Zhu,
  title={Deep Graph Contrastive Representation Learning}, 
  author={Yanqiao Zhu and Yichen Xu and Feng Yu and Qiang Liu and Shu Wu and Liang Wang},
  year={2020},
  journal={arXiv preprint arXiv:2006.04131},
  archivePrefix={arXiv}
}

@ARTICLE{2021Chameleon-Benedek,
  author={Rozemberczki, Benedek and Allen, Carl and Sarkar, Rik and Thilo Gross, xx},
  journal={Journal of Complex Networks}, 
  title={Multi-Scale attributed node embedding}, 
  year={2021},
  volume={9},
  number={1},
  pages={1-22},
}

@inproceedings{2019MImax-Bachman,
    author = {Bachman, Philip and Hjelm, R Devon and Buchwalter, William},
    booktitle = {Advances in Neural Information Processing Systems},
    title = {Learning Representations by Maximizing Mutual Information Across Views},
    volume = {32},
    year = {2019}
}

@ARTICLE{2020pseudo-label-Arazo,
  title={Pseudo-Labeling and Confirmation Bias in Deep Semi-Supervised Learning}, 
  author={Eric Arazo and Diego Ortego and Paul Albert and Noel E. O'Connor and Kevin McGuinness},
  year={2020},
  journal={arXiv preprint arXiv:1908.02983},
  archivePrefix={arXiv},
}

@inproceedings{2020rh_Zhu,
 author = {Zhu, Jiong and Yan, Yujun and Zhao, Lingxiao and Heimann, Mark and Akoglu, Leman and Koutra, Danai},
 booktitle = {Advances in Neural Information Processing Systems},
 pages = {7793--7804},
 title = {Beyond Homophily in Graph Neural Networks: Current Limitations and Effective Designs},
 volume = {33},
 year = {2020}
}

@inproceedings{2020rnh-mpei,
  title={GEOM-GCN: GEOMETRIC GRAPH CONVOLUTIONAL NETWORKS},
  author={Pei, Hongbin and Wei, Bingzhe and Chang, Kevin Chen Chuan and Lei, Yu and Yang, Bo},
  booktitle={8th International Conference on Learning Representations, ICLR 2020},
  year={2020}
}

@ARTICLE{2017Adam-Kingma,
  title={Adam: A Method for Stochastic Optimization}, 
  author={Diederik P. Kingma and Jimmy Ba},
  year={2017},
  journal={arXiv preprint arXiv:1412.6980},
  archivePrefix={arXiv}
}

@inproceedings{2023ABGML-R2-chen,
  title={Graph self-supervised learning with augmentation-aware contrastive learning},
  author={Chen, Dong and Zhao, Xiang and Wang, Wei and Tan, Zhen and Xiao, Weidong},
  booktitle={Proceedings of the ACM Web Conference 2023},
  pages={154--164},
  year={2023}
}

@inproceedings{2021GCA-R2-zhu,
    title={Graph contrastive learning with adaptive augmentation},
    author={Zhu, Yanqiao and Xu, Yichen and Yu, Feng and Liu, Qiang and Wu, Shu and Wang, Liang},
    booktitle={Proceedings of the web conference 2021},
    pages={2069--2080},
    year={2021}
}

@inproceedings{2021BGRL-R2-thakoor,
  title={Bootstrapped representation learning on graphs},
  author={Thakoor, Shantanu and Tallec, Corentin and Azar, Mohammad Gheshlaghi and Munos, R{\'e}mi and Veli{\v{c}}kovi{\'c}, Petar and Valko, Michal},
  booktitle={ICLR 2021 workshop on geometrical and topological representation learning},
  year={2021}
}

@inproceedings{2025R4-li-hypergraph,
  title={When hypergraph meets heterophily: New benchmark datasets and baseline},
  author={Li, Ming and Gu, Yongchun and Wang, Yi and Fang, Yujie and Bai, Lu and Zhuang, Xiaosheng and Lio, Pietro},
  booktitle={Proceedings of the AAAI Conference on Artificial Intelligence},
  volume={39},
  number={17},
  pages={18377--18384},
  year={2025}
}

@article{2024R4-li-permutation,
  author={Li, Jianfei and Zheng, Ruigang and Feng, Han and Li, Ming and Zhuang, Xiaosheng},
  journal={IEEE Transactions on Neural Networks and Learning Systems}, 
  title={Permutation Equivariant Graph Framelets for Heterophilous Graph Learning}, 
  year={2024},
  volume={35},
  number={9},
  pages={11634-11648}
}

@article{2022Hete-zheng,
    title={Graph Neural Networks for Graphs with Heterophily: A Survey}, 
    author={Xin Zheng and Yi Wang and Yixin Liu and Ming Li and Miao Zhang and Di Jin and Philip S. Yu and Shirui Pan},
    year={2024},
    journal={arXiv preprint arXiv:2202.07082},
    archivePrefix={arXiv}
}

@inproceedings{2021CPGNN-zhu,
  title={Graph neural networks with heterophily},
  author={Zhu, Jiong and Rossi, Ryan A and Rao, Anup and Mai, Tung and Lipka, Nedim and Ahmed, Nesreen K and Koutra, Danai},
  booktitle={Proceedings of the AAAI conference on artificial intelligence},
  volume={35},
  number={12},
  pages={11168--11176},
  year={2021}
}

@INPROCEEDINGS{11209498,
  author={Ye, Yixuan and Zhang, Yang and Peng, Liang and Li, Rui and Liu, Cheng and Wu, Si and Wong, Hau-San},
  booktitle={2025 IEEE International Conference on Multimedia and Expo (ICME)}, 
  title={Cross-View Neighborhood Contrastive Multi-View Clustering with View Mixup Feature Learning}, 
  year={2025},
  pages={1-6},
  doi={10.1109/ICME59968.2025.11209498}}

@article{xian2025mdhgfn,
  title={MDHGFN: Multiscale Dual Hypergraph Fusion Spatiotemporal Network for traffic flow prediction},
  author={Xian, Jiajun and Ye, Yixuan and Zhang, Wei and Chen, Zishen and Huang, Jiaqiang and Lin, Zichun and Zhou, Can and Liu, Hao and Yang, Dan and Meng, Nan and others},
  journal={Chaos, Solitons \& Fractals},
  volume={201},
  pages={117228},
  year={2025},
  publisher={Elsevier}
}

%\newpage

%\section{Biography Section}
%If you have an EPS/PDF photo (graphicx package needed), extra braces are needed around the contents of the optional argument to biography to prevent the LaTeX parser from getting confused when it sees the complicated
%$\backslash${\tt{includegraphics}} command within an optional argument. (You can create your own custom macro containing the $\backslash${\tt{includegraphics}} command to make things simpler here.)
 
%\vspace{11pt}

%\bf{If you include a photo:}\vspace{-33pt}
%\begin{IEEEbiography}[{\includegraphics[width=1in,height=1.25in,clip,keepaspectratio]{fig1}}]{Michael Shell}
%Use $\backslash${\tt{begin\{IEEEbiography\}}} and then for the 1st argument use $\backslash${\tt{includegraphics}} to declare and link the author photo.
%Use the author name as the 3rd argument followed by the biography text.
%\end{IEEEbiography}

%\vspace{11pt}

%\bf{If you will not include a photo:}\vspace{-33pt}
%\begin{IEEEbiographynophoto}{John Doe}
%Use $\backslash${\tt{begin\{IEEEbiographynophoto\}}} and the author name as the argument followed by the biography text.
%\end{IEEEbiographynophoto}

% \vfill

\end{document}